\crefname{section}{§}{§}
\Crefname{section}{§}{§}
\newcolumntype{C}[1]{>{\centering}p{#1}}
\begin{document}

\title{New Wide Locally Recoverable Codes with Unified Locality}

\author{Liangliang Xu}
\affiliation{%
  \institution{Xidian University}
  \city{Xi'an}
  \state{Shanxi}
  \country{China}
}
\email{xuliangliang@xidian.edu.cn}

\author{Mingfeng Tang}
\affiliation{%
  \institution{Xidian University}
  \city{Xi'an}
  \state{Shanxi}
  \country{China}
}
\email{fengming.tang553555@outlook.com}

\author{Tingting Chen}
\affiliation{%
  \institution{Xidian University}
  \city{Xi'an}
  \state{Shanxi}
  \country{China}
}
\email{chentingting@xidian.edu.cn}

\author{Qiliang Li}
\affiliation{%
  \institution{University of Science and  Technology of China}
  \city{Hefei}
  \state{Anhui}
  \country{China}
}
\email{leeql@mail.ustc.edu.cn}

\author{Min Lyu}
\affiliation{%
  \institution{University of Science and  Technology of China}
  \city{Hefei}
  \state{Anhui}
  \country{China}
}
\email{lvmin05@ustc.edu.cn}

\author{Gennian Ge}
\affiliation{%
  \institution{Capital Normal University}
  \city{Beijing}
  \country{China}
}
\email{gnge@zju.edu.cn}



\begin{abstract}
Wide Locally Recoverable Codes (LRCs) have recently been proposed as a solution for achieving high reliability, good performance, and ultra-low storage cost in distributed storage systems. 
However, existing wide LRCs struggle to balance optimal fault tolerance and high availability during frequent system events. By analyzing the existing LRCs, we reveal three limitations in the LRC construction which lay behind the non-optimal overall performance from multiple perspectives, including non-minimum local recovery cost, non cluster-topology-aware data distribution, and non XOR-based local coding. Thanks to the flexible design space offered by the locality property of wide LRCs, we present UniLRC, which unifies locality considerations in code construction. UniLRC achieves the optimal fault tolerance while overcoming the revealed limitations.
We implement UniLRC prototype and conduct comprehensive theoretical and system evaluations, showing significant improvements in reliability and performance over existing wide LRCs deployed in Google and Azure clusters.
\end{abstract}



\keywords{Distributed Storage System, Locally Recoverable Codes, Erasure Coding}


\maketitle

\section{Introduction}
Large-scale distributed storage systems~(DSSs) now store vast data, with the majority protected by erasure codes \cite{wang2023design, zhou2024cord, CEPHEC, HDFS_EC, shvachko2010hadoop, pan2021facebook, muralidhar2014f4}. As data volume continues to grow exponentially, minimizing overhead has become increasingly critical. A recent strategy to reduce storage overhead involves using wide stripes for encoding, which utilizes ultra-low parity redundancy. These types of codes, known as ``wide codes'', have been deployed in commercial DSSs, such as VAST \cite{VAST} and Google \cite{kadekodi2023practical}.

At a high level, for configurable parameters $n$ and $k$ (where $k<n$), erasure codes compose a \textit{stripe} of $n$ blocks, consisting of $k$ original uncoded \textit{data blocks} and $n-k$ coded \textit{parity blocks}.
Here, $n$ is the stripe width, and $k/n$ is the 
storage efficiency, also known as the code rate. Wide codes, designed to maximize the code rate $k/n$, focus on large $n$ and $k$ for ultra-low redundancy.
VAST has reported a ($154,150$) code with a $0.974$ code rate \cite{VAST}, and Google has tested a ($105,96$) code with a $0.914$ code rate \cite{kadekodi2023practical}. The stripe width of the these wide codes is much larger than typical storage systems, such as the $>10\times$ width of ($9,6$) code in traditional DSSs \cite{ovsiannikov2013quantcast, HDFS_EC, ford2010availability}.
In recent years, wide codes have increasingly been designed as Locally Recoverable Codes (LRCs) \cite{hu2021exploiting, kadekodi2023practical}, as LRCs significantly reduce recovery costs by adding only a small additional overhead for local parity storage.

However, the ultra-low redundancy and large stripe width of wide LRCs introduce new challenges in balancing reliability and performance during frequent events such as normal read, degraded read, and single-block recovery (reconstruction). The main challenges are as following:
 (1) Extreme code rates require very few parity blocks, making it difficult to balance fault tolerance and recovery cost. LRCs have two types of parity blocks: more global parity blocks provide higher fault tolerance, while more local parity blocks reduce recovery cost.
(2) Wide stripes are difficult to deploy in DSSs due to the hierarchical cluster topology and asymmetric network bandwidth, with oversubscription ratios typically ranging from $5:1$ to $20:1$ \cite{benson2010network,vahdat2010scale,cisco,ahmad2014shufflewatcher}. The varying block types and quantities in wide LRCs further complicate mapping stripes to the cluster topology. Ensuring fault tolerance at the cluster level while minimizing cross-cluster traffic remains a significant challenge.


Fortunately, the locality property of wide LRCs offers significant design flexibility for reliability and performance. For example, factors like the local group size, data distribution across groups, and the linearity of local parity block offer a versatile design space.

We study the locality of existed wide LRCs from three key perspectives: (1) \textit{Recovery locality}: This measures the average number of blocks accessed during reconstruction, reflecting recovery performance in frequent single-failure events \cite{rashmi2013solution}. Previous LRC designs prioritized distance optimality (maximizing fault tolerance) and high code rates \cite{tamo2014family, sathiamoorthy2013xoring}, but at the expense of optimal recovery locality. 
(2) \textit{Topology locality}:
This utilizes inner-cluster access to minimize cross-cluster traffic. ECWide \cite{hu2021exploiting} is a state-of-the-art data placement strategy that improves topology locality in wide LRCs to reduce cross-cluster recovery traffic. However, as ECWide primarily focuses on placement optimization without addressing fundamental limitations in code construction, it fails to fully optimize cross-cluster recovery traffic and can degrade the performance of other operations, such as normal read (see details in \cref{sec:study_locality}).
(3)\textit{\texttt{XOR} locality:} This uses \texttt{XOR}-based local parity computation to avoid the high multiplication complexity \cite{isa_l}. However, existing distance-optimal code designs often prioritize high reliability, sacrificing \texttt{XOR} locality in local group coding \cite{kadekodi2023practical,tamo2014family,sathiamoorthy2013xoring}.

These challenges motivate us to design a new family of wide LRCs, \textit{UniLRCs}, which provide \textbf{\underline{uni}}fied locality, seamlessly integrating \textit{recovery locality}, \textit{topology locality}, and \textit{\texttt{XOR} locality} in code construction, effectively balancing reliability and performance. UniLRCs are constructed using a generator matrix derived from a Vandermonde matrix, followed by a series of matrix decompositions to tightly couple between local and global parity blocks.
UniLRCs are \textit{distance optimal}, achieving the theoretical Singleton bound on fault tolerance while addressing locality limitations.
Our key contributions include:

\begin{itemize}
    \item We study the multidimensional locality of wide LRCs deployment in practical Google and Azure storage clusters, covering \textit{recovery locality}, \textit{topology locality}, and \textit{\texttt{XOR} locality}. Our findings highlight several limitations in existing wide LRCs implementation, which leads to inefficiencies in common events such as normal read, degraded read, and reconstruction (\cref{sec:study_locality}).
    \item We present UniLRCs, a novel family of wide LRCs designed to balance performance and reliability. UniLRCs are constructed using a generator matrix derived from a Vandermonde matrix, followed by matrix decompositions to tightly couple local and global parity blocks (\cref{sec:construction}). This approach ensures \textit{distance optimality}, while addressing the limitations of \textit{recovery locality}, \textit{topology locality}, and \textit{\texttt{XOR} locality} (\cref{sec:system_unilrc}).
    \item We perform a theoretical comparison of different wide LRCs deployed in Azure and Google DSSs, using multiple metrics to model the overhead of each code. Our results demonstrate that UniLRC outperforms existing wide LRCs across various aspects, including load balancing during normal read, degraded read and reconstruction cost, and mean-time-to-data-loss (MTTDL) reliability (\cref{sec:the_ana}).
    \item We implement the UniLRC prototype and conduct extensive system evaluations. 
    Compared with the state-of-the-art Google wide LRC, UniLRC achieves a $27.46\%$ increase in normal read throughput,  
   a $33.15\%$ reduction in degraded read latency, and a $90.27\%$ increase in recovery throughput (\cref{sec:system_performance}).
\end{itemize}

\section{Background and Motivation}\label{background}
\subsection{Erasure Coding Basics}\label{sec:ec_basic}
Erasure coding is commonly used in many DSSs \cite{huang2012erasure, sathiamoorthy2013xoring,HDFS_EC,CEPHEC,kadekodi2023practical}, because it provides the same fault tolerance as replication but at a significantly lower cost.
Typically described as an $(n,k)$ erasure code, it encodes $k$ data blocks into $n-k$ parity blocks to form a stripe of width $n$. We begin by defining key terms in erasure coding.

\begin{definition}[Linear code \cite{huffman2010fundamentals}]
\label{def:code_matrix}
A linear $(n, k)$ code over $\mathbb{F}_q$ is a $k$-dimensional subspace $C \subseteq \mathbb{F}^n_q$ where $q$ is a prime power, $\mathbb{F}_q$ is a galois field ($GF$)  of $q$ elements, and $n > 0$. 
It is customary to think of $C$ as the image of an encoding map $Enc: \mathbb{F}^k_q \rightarrow \mathbb{F}^n_q$ for some $k \leq n$. This encoding may be expressed in matrix form as,
$$Gx = y,$$
where $G$ is an $n \times k$ matrix called the generator matrix, $x$ is the message corresponding to $k$ data blocks, and $y$ is the codeword corresponding to $n$ blocks of the stripe. 
The fraction $\frac{k}{n}$ is the code rate. Each component of $y$ is called a codeword symbol (block). The minimum distance $d$ of $C$, where $d=\min_{y\neq z,y,z\in C}d(y,z)$, and 
\[d(y,z)=|\{i:y_i\neq z_i\}|,\]
is called the distance between two codewords $y,z$.
\end{definition}

An \((n, k)\) code with a higher \( k/n \) code rate results in lower storage cost. Meanwhile, an \((n, k)\) code with a larger \( d \) (minimum distance) provides higher fault tolerance. 
An $(n, k)$ code with the \textit{Maximum Distance Separable (MDS)} property \cite{MDS}, such as Reed-Solomon codes \cite{reed1960polynomial}, can tolerate up to $n-k$ concurrent failures. However, for recovering a single-block failure, MDS codes commonly require retrieving \( k \) blocks, which incurs significant recovery cost. 
To address this, \textit{locally recoverable codes (LRCs)} are introduced, reducing the number of blocks required for single-block recovery from \( k \) to \( r \) (\( r \ll k \)).

\begin{definition}[$(n, k, r)$-LRC \cite{asteris2013xoring,huang2012erasure}]
    An $(n, k, r)$-LRC $C$ is a linear code of dimension $k$ and code length $n$, with the following property: for any $x=(x_1,\ldots,x_n)\in C$, any block $x_i$ can be recovered by at most $r$ other blocks $\{x_{\ell_1},\ldots,x_{\ell_r}\}$ where $\ell_j\neq i$ for any $j$. The index set $\{i,\ell_1,\ldots,\ell_r\}$ form a local (recovery) group of $C$. The parameter $r$, known as the locality parameter, satisfies $1 \leq r \leq k$.  A local parity block in an LRC is a parity block computed within a local group of blocks, while a global parity block is a parity block computed across all $k$ data blocks. Let $g$ and $l$ denote the number of global and local parity blocks, respectively.
\end{definition}

LRCs have a bound on the minimum distance $d$ as follows.

\begin{theorem}[Singleton bound and distance optimal LRCs\cite{gopalan2012locality,papailiopoulos2012locally,xing2022construction,guruswami2018long}] \label{thm:singleton}
The minimum distance $d$ of LRC$(n, k, r)$  satisfies $$ d \leq n - k - \left\lceil \tfrac{k}{r} \right\rceil + 2.$$
Moreover, any LRC$(n, k, r)$ that meets the equal condition is called the distance optimal LRC.

If $r\geq d-2$ and $(r+1)|n$, the condition for distance optimal can be reformulated as:
$$n-k-\frac{n}{r+1} = d-2.$$
\end{theorem}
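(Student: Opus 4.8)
The statement bundles two facts, and I would treat them separately. For the generalized Singleton bound I would run the classical linear-algebraic argument on the generator matrix $G$ of $C$, whose $n$ rows I denote $g_1,\dots,g_n\in\mathbb{F}_q^k$. The key observation is that for any $S\subseteq[n]$ with $\operatorname{rank}(G_S)\le k-1$ (where $G_S$ is the row-submatrix indexed by $S$) there is a nonzero $x$ with $G_Sx=0$, i.e., a nonzero codeword $Gx$ whose support avoids $S$, so $d\le n-|S|$. Hence it suffices to build a set $S$ with $\operatorname{rank}(G_S)\le k-1$ and $|S|\ge k+\lceil k/r\rceil-2$. I would construct $S$ greedily: repeatedly pick a coordinate $i$ whose row $g_i$ is not yet in the span of the rows already selected, adjoin the whole local group $\Gamma_i$ of $i$ (size at most $r+1$), and halt just before $\operatorname{rank}(G_S)$ would reach $k$. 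Since every symbol of a local group is determined by the other $\le r$ symbols, the rows of $\Gamma_i$ span a space of dimension at most $r$, so adjoining $\Gamma_i$ raises the rank by at most $r$ while it can enlarge $|S|$ by up to $r+1$; iterating $\lceil k/r\rceil-1$ such steps and then adding coordinates one at a time inside a final group until the rank is exactly $k-1$ yields $|S|\ge(k-1)+(\lceil k/r\rceil-1)$, which is what we need.

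The step I expect to require the most care is the endgame of this greedy process: one must argue that, as long as $\operatorname{rank}(G_S)\le k-2$, it is always possible to enlarge $S$ by strictly more than the accompanying increase in rank — which amounts to controlling the overlap of a newly adjoined group with the current $S$ and ruling out ``wasted'' steps in which a group contributes fresh coordinates but no fresh rank. This is exactly the argument of Gopalan et al.\ and the other references cited in the statement, so in the paper I would either reproduce it in a few lines or simply invoke it.

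For the reformulation I would first invoke the standard structural normalization for distance-optimal LRCs: when $(r+1)\mid n$, such a code can be taken to have $n/(r+1)$ pairwise-disjoint local groups of size exactly $r+1$, each carrying one local parity, so that the number of local parities is $l=n/(r+1)$ and $k=n-l-g$ with $g$ global parities. A short computation then gives $k=r\cdot\tfrac{n}{r+1}-g$ and, since $l$ is an integer, $\lceil k/r\rceil=\tfrac{n}{r+1}-\lfloor g/r\rfloor$. Substituting this into the equality case $d=n-k-\lceil k/r\rceil+2$ of the first part yields $d-2=(l+g)-\lceil k/r\rceil=g+\lfloor g/r\rfloor$; the hypothesis $r\ge d-2$ then forces $g<r$, since $g\ge r$ would give $d-2=g+\lfloor g/r\rfloor\ge g+1>r$. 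Hence $\lfloor g/r\rfloor=0$, so $\lceil k/r\rceil=\tfrac{n}{r+1}$ and $d-2=g=(n-k)-\tfrac{n}{r+1}$, the asserted identity; conversely, reading the same computation backwards, $\lceil k/r\rceil=\tfrac{n}{r+1}$ turns the generalized Singleton bound into $d-2\le n-k-\tfrac{n}{r+1}$, so the displayed equality is precisely the distance-optimality condition under these hypotheses. The only genuine subtleties here are justifying the structural normal form and the ceiling identity for $\lceil k/r\rceil$; everything else is rearrangement.
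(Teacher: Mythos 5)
The paper offers no proof of this theorem; it is stated as a citation to its references, so there is nothing internal to compare against. For the generalized Singleton bound, your greedy sketch is the classical Gopalan--Huang--Simitci--Yekhanin argument, and the core estimate (a low-rank row subset $S$ yields a nonzero codeword supported off $S$, hence $d\le n-|S|$) together with the greedy construction is the right route; you also correctly flag the endgame bookkeeping as the delicate part. One phrase I would tighten: ``it can enlarge $|S|$ by up to $r+1$'' is an upper bound on $|S_i|-|S_{i-1}|$, whereas what the argument needs is the \emph{lower} bound $|S_i|-|S_{i-1}|\ge\bigl(\operatorname{rank}(S_i)-\operatorname{rank}(S_{i-1})\bigr)+1$ at every non-terminal step, which follows because $g_{j_i}$ lies in the span of the other rows of $\Gamma_{j_i}$. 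Deferring to the cited sources for this is reasonable.

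The second part has a real gap, and your own calculation exposes it. The forward implication (Singleton equality implies the displayed identity) is fine granting the structural normal form: you get $d-2=g+\lfloor g/r\rfloor$, and $r\ge d-2$ forces $g<r$, hence $\lfloor g/r\rfloor=0$, $\lceil k/r\rceil=n/(r+1)$, and $d-2=g=n-k-n/(r+1)$. But the converse, ``reading the computation backwards,'' is circular: $\lceil k/r\rceil=n/(r+1)$ requires $\lfloor g/r\rfloor=0$, i.e.\ $g<r$, and you only obtained $g<r$ \emph{from} Singleton equality. Starting from the displayed identity alone you get $g=d-2\le r$; if $g=r$ then $\lfloor g/r\rfloor=1$, $\lceil k/r\rceil=n/(r+1)-1$, and the Singleton bound reads $d\le n-k-n/(r+1)+3$, strictly larger than the $d$ in hand, so Singleton equality fails. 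This is not a corner case the paper can discard: UniLRC has $g=\alpha z=r$ exactly and $r\mid k$ (so $\lceil k/r\rceil=k/r=n/(r+1)-1$), which means the asserted equivalence breaks at precisely the parameters the paper feeds into it in its Theorem~3.4. What the paper presumably intends, and what is actually true, is that $n-k-n/(r+1)+2$ is the correct distance bound for LRCs whose disjoint local groups of size $r+1$ partition $[n]$; when $r\mid k$ this bound is strictly tighter than $n-k-\lceil k/r\rceil+2$ by one, and UniLRC meets \emph{that} bound with equality. Your proof cannot close this gap because the gap is in the statement itself, not just the argument. Separately, the ``structural normalization'' (disjoint size-$(r+1)$ groups, one local parity each) does not follow from $(r+1)\mid n$ for an arbitrary LRC and must be imposed as an explicit hypothesis; you flag this, but it deserves to be flagged louder since the reformulation is simply false without some such assumption.
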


If an LRC is distance optimal, it means achieving the maximum possible on fault tolerance with given code parameters.

\subsection{Wide LRCs}
Data redundancy is a major component of DSSs' storage cost. Every percentage reduction in storage cost can translate into millions of dollars in capital, operational, and energy savings, minimizing this cost remains a highly attractive and practical objective \cite{kadekodi2023practical}.
Wide LRCs have recently gained attention due to the demand for extreme storage efficiency \cite{hu2021exploiting,kadekodi2023practical}, focusing on large $n$ and $k$ to maximize the code rate $k/n$.
The extreme code rate and wide stripe of wide LRCs introduce new challenges in terms of reliability and performance, as outlined below.

$\bullet$ \textbf{Extreme code rate with limited parity blocks.} Many industrial DSSs employ high-rate erasure codes, such as VAST's ($154, 150$) code with a 0.974 code rate \cite{VAST} and Google's ($105, 96$) code with a 0.914 rate \cite{kadekodi2023practical}.
In these systems, high-rate erasure codes are typically defined as having a code rate of $0.85$ or higher \cite{VAST, kadekodi2023practical}. The extreme code rate implies a strict limitation on parity redundancy. 
The strict parity constraints of the wide LRC require a small number of global and local parity blocks (small $g+l$). However, to enhance fault tolerance, more global parity blocks (large $g$) are necessary, while reducing single-block recovery traffic requires more local parity blocks (large $l$). Achieving the right balance between global and local parity blocks for optimal fault tolerance and performance is therefore challenging.
    
$\bullet$ \textbf{Wide stripes with cluster-topology-aware deployment.} Wide LRCs are commonly used with a large stripe width, typically defined as $25-504$ \cite{VAST, kadekodi2023practical}, approximately $2.8\times$ to $56.0\times$ that of Google’s traditional cluster deployment of ($9,6$) codes \cite{ford2010availability}. Wide stripes require more disks to store data, and the larger storage scale introduces multi-tiered topologies in DSSs, such as the cluster topology that can span regions, availability zones, or racks \cite{ford2010availability, zhang2024s}. As a result, the cluster topology introduces asymmetric network bandwidth, with inner-cluster and cross-cluster traffic competing, typically characterized by oversubscription ratios ranging from $5:1$ to $20:1$ \cite{benson2010network,vahdat2010scale,cisco,ahmad2014shufflewatcher}.    
Futhermore, wide LRCs involve three types of blocks: data blocks, local parity blocks and global parity blocks, 
with a significant disparity in the number of blocks, i.e, $l+g \ll k$. 
This large variation in block types and quantities makes mapping stripes to the cluster topology highly complex, especially for uniform access I/Os during normal read, degraded read and reconstruction operations.

To address these challenges, the locality of wide LRCs offers a flexible design space for balancing reliability and performance. For instance, reducing the local group size can reduce recovery traffic, and different data distributions across local groups can lead to varying access patterns. Additionally, the linear properties of local parity blocks can balance fault tolerance and decoding complexity. We present a study on locality using state-of-the-art wide LRCs deployed in Google and Azure storage clusters.

\subsection{Locality Study of Existed Wide LRCs} 
\label{sec:study_locality}
We select three representative LRCs that are widely deployed in practical systems: Azure-LRC (abbr. ALRC) \cite{huang2012erasure}, Optimal Cauchy LRC (abbr. OLRC) \cite{kadekodi2023practical}, and Uniform Cauchy LRC (abbr. ULRC) \cite{kadekodi2023practical}. 
ALRC was the first LRC proposed and deployed in Microsoft Azure \cite{huang2012erasure}. OLRC and ULRC represent two state-of-the-art wide LRCs deployed in Google storage clusters \cite{kadekodi2023practical}.
We provide a construction example for these LRCs with $n=42, k=30$ in Figure \ref{fig:baseline_lrc_example}.
We focus on system frequent events, such as normal/degraded reads and single-block recovery (reconstruction) \cite{kadekodi2023practical, rashmi2013solution, kolosov2018fault}. 

\begin{figure}[t!]
    \centering
    \includegraphics[scale=0.42]{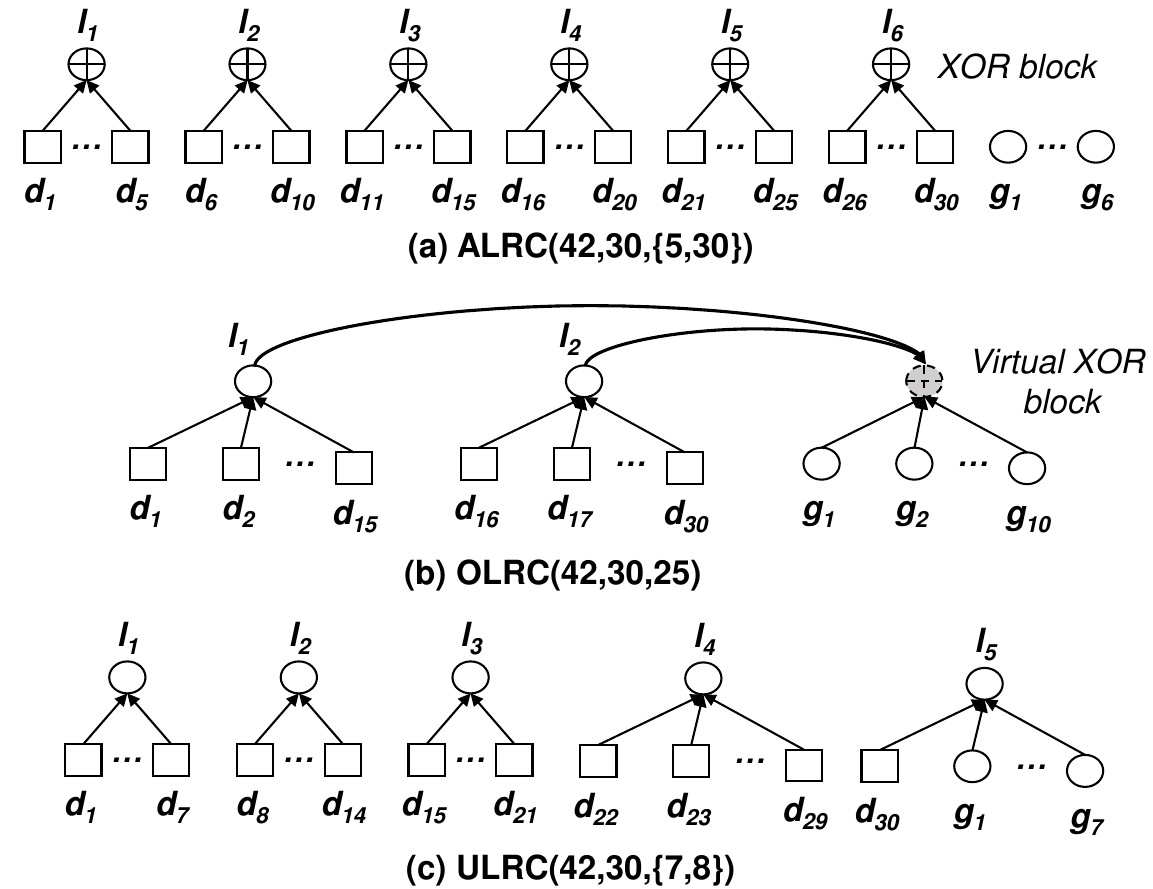}
    \caption{Different wide LRCs with $n=42, k=30$. The LRCs are from Microsoft \cite{huang2012erasure} (ALRC) and Google \cite{kadekodi2023practical} (OLRC and ULRC).
    The terms $d_i$, $l_i$ and $g_i$ mean data block, local and global parity block, respectively.}
    \label{fig:baseline_lrc_example}
\end{figure}

\subsubsection{Recovery Locality.}\label{subsubsection:code_locality}
We define \textit{recovery locality} as \( \bar{r} \), the average number of  blocks required to recover a data or parity block. This \( \bar{r} \) has also been used in previous analyses \cite{kolosov2018fault, kadekodi2023practical}. 
Recovery locality \( \bar{r} \) reflects the average recovery traffic during reconstruction.
Figure \ref{fig:baseline_lrc_example}(a)
shows the ALRC$(42,30,\{5,30\})$, with \textit{recovery locality} \( \bar{r} = \frac{36 \times 5 + 6 \times 30}{42} = 8.57 \). 
Figure \ref{fig:baseline_lrc_example}(b) shows OLRC$(42,30,25)$, with \textit{recovery locality}  \( \bar{r} = 25\), as incurring $25$ blocks for recovering any block no matter data or parity blocks.
Meanwhile, 
Figure \ref{fig:baseline_lrc_example}(c) shows the ULRC$(42,30,\{7,8\})$  with \textit{recovery locality} \( \bar{r} =\frac{24 \times 7 + 18 \times 8}{42} = 7.43 \), which outperforms the other two LRCs.

\textbf{Limitation \#1: Failing to achieve optimal \textit{recovery locality} in existing LRCs.}
The OLRC achieves distance optimality, but its construction condition (i.e., $gl^2 < k+gl$) requires a small number of locality parity blocks (i.e., small $l$), resulting in 
large group size with a large \textit{recovery locality} \( \bar{r} \).
Meanwhile, ULRC offers a good trade-off between reliability and \textit{recovery locality}, though it is not distance optimal (see Theorem \ref{thm:code_distance}). 
However, its \textit{recovery locality} is not optimal (see Theorem \ref{thm:locality} for UniLRC achieving the minimum \textit{recovery locality}), as the design focuses on achieving an approximately even local group size. 
For exampe in Figure \ref{fig:baseline_lrc_example}(c), the ULRC$(42,30,\{7,8\})$ features two types of local group sizes, with sizes of 8 and 9.
Specially, prior LRC constructions  aimed at achieving distance optimality and extreme code rates \cite{tamo2014family,sathiamoorthy2013xoring}. As a result, the existing LRCs fail to achieve optimal \textit{recovery locality}, i.e., the minimal \( \bar{r} \).

\begin{figure}[t!]
    \centering
    \includegraphics[scale=0.38]{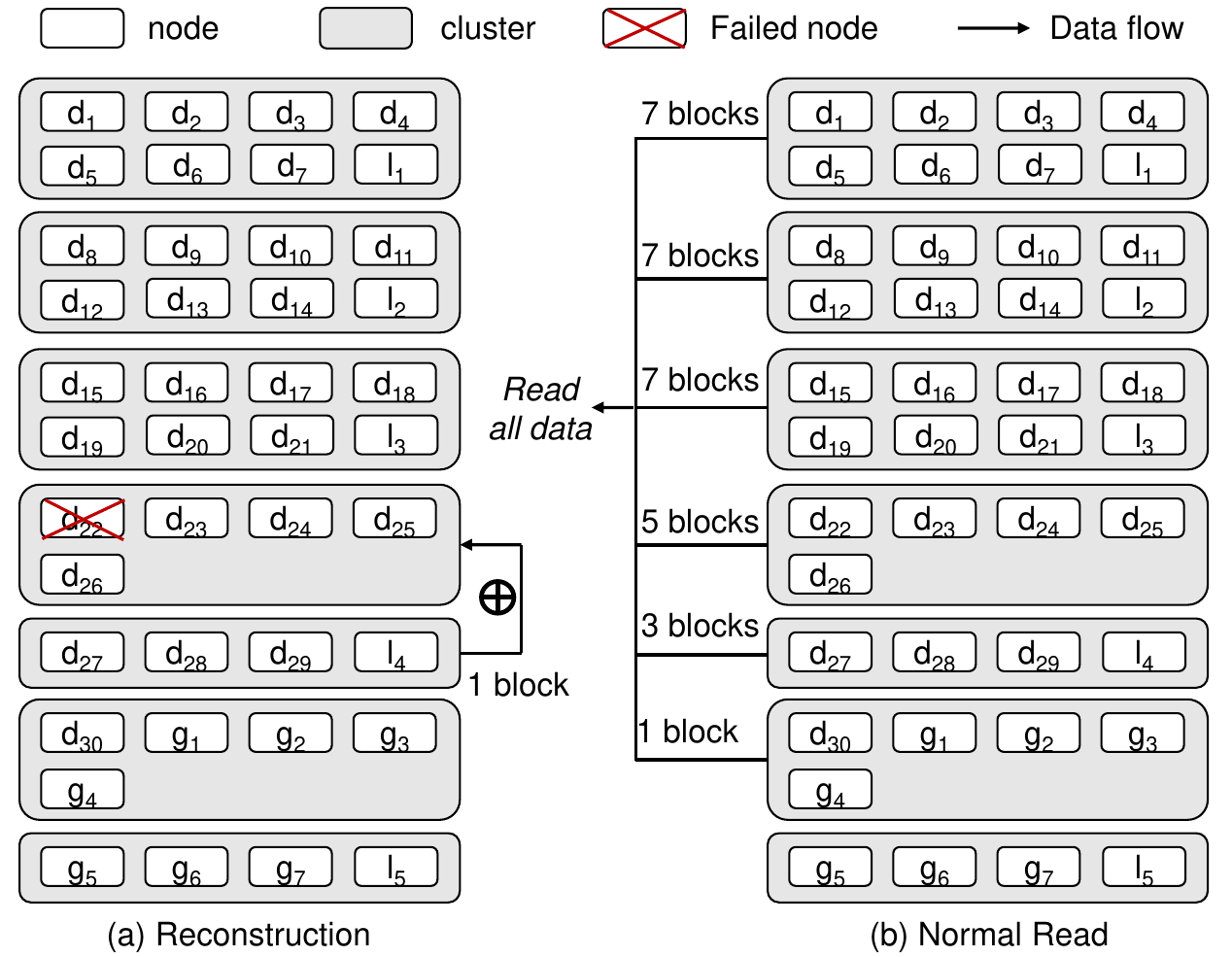}
    \caption{Topology locality of ECWide for ULRC$(42,30,\{7,8\})$.}
    \label{fig:cluster_locality}
\end{figure}

\subsubsection{Topology Locality.}\label{subsubsection:cluster_locality}
The wide LRCs are commonly deployed in a large-scale storage system with multiple clusters \cite{hu2021exploiting}.
The cross-cluster bandwidth is more costly than inner-cluster bandwidth.
Therefore, topology-aware placement can reduce cross-cluster traffic during read and recovery events, which we refer to as \textit{topology locality}.
ECWide \cite{hu2021exploiting} is a state-of-the-art placement strategy that considers \textit{topology locality} of wide LRCs, aiming to minimize cross-cluster recovery cost. 
The core idea of ECWide is to place blocks into a minimum number of clusters while tolerating one-cluster failures. 
This approach reduces the cross-cluster traffic during single-block recovery.
Figure \ref{fig:cluster_locality}(a) shows the application of ECWide on the ULRC$(42,30,\{7,8\})$.
We observe that the first three local groups are placed into exactly three clusters,  while each of the last two local groups are placed into two clusters.
As a result, under reconstructions, 57.1\% blocks incur no cross-cluster recovery traffic, while the remaining blocks incur cross-cluster recovery traffic involving only one block.

\textbf{Limitation \#2: Existing \textit{topology locality} placement optimizing recovery but undermining normal read performance.}
Although ECWide aims to minimize cross-cluster recovery cost, the inherent code construction of ULRC leads to non-optimal cross-cluster recovery traffic.
As shown in Figure \ref{fig:cluster_locality}(a), the recovery of any failed block in the last two local groups still incurs cross-cluster recovery traffic.
Meanwhile, ECWide's grouping of blocks causes load imbalance during normal reads.
Figure \ref{fig:cluster_locality}(b) shows that its data placement results in a $7 \times$ load imbalance across clusters, leading to bottlenecks in the first three clusters (each reading 7 blocks).
As a result, while ECWide improves the recovery performance, it undermines normal read efficiency.

\subsubsection{\texttt{XOR} Locality.} \label{subsubsection:computing_locality}
\texttt{XOR} demonstrates superior coding speed over multiplicity on modern CPUs.
We refer to computing the local parity block using only \texttt{XOR} operations as \textit{\texttt{XOR} locality}.
The \texttt{XOR} local parity block is computed by setting all coefficients to 1 in the corresponding row of the generator matrix. 
For example, the local parity block \( l_1 \) of the ALRC$(42,30,\{5,30\})$ in Figure \ref{fig:baseline_lrc_example}(a) is computed as \( l_1 = \sum_{i=1}^{5} d_i \), simplifying decoding computation in degraded read and reconstruction (e.g., \( d_1 = l_1 + d_2 + \cdots + d_5 \), with only \texttt{XOR} operations).
In contrast, the global parity block \( g_1 = \sum_{i=1}^{30} \alpha_i d_i \), where \( \alpha_i \) represents the encoding coefficients. Therefore, recovering \( g_1 \) requires additional multiplicity (abbr. \texttt{MUL}).
Figure \ref{fig:xor_mul}(a) illustrates the coding throughput on three Intel/AMD CPU families, using the Intel ISA-L coding APIs \cite{isa_l}. The block size is fixed at 64MB, and two blocks are processed using either \texttt{XOR} or \texttt{MUL}+\texttt{XOR}, where \texttt{MUL} $+$ \texttt{XOR} first finds the $GF$ multiplication table for \texttt{MUL} before performing \texttt{XOR}.
The results show that the \texttt{XOR} throughput is consistently higher than \texttt{MUL}+\texttt{XOR}, ranging from 61.33\% to 129.44\%. Additionally, the difference in throughput increases with higher CPU clock speeds. For example, throughput increases from 61.33\% at 2 GHz to 129.44\% at 2.8 GHz, as the time spent finding the $GF$ multiplication table becomes a greater proportion of the total execution time at lower CPU clock speeds.

\begin{figure}[!t]
\centering
\subfigure[Coding Speed]{
\centering
  \includegraphics[width=0.5\linewidth]{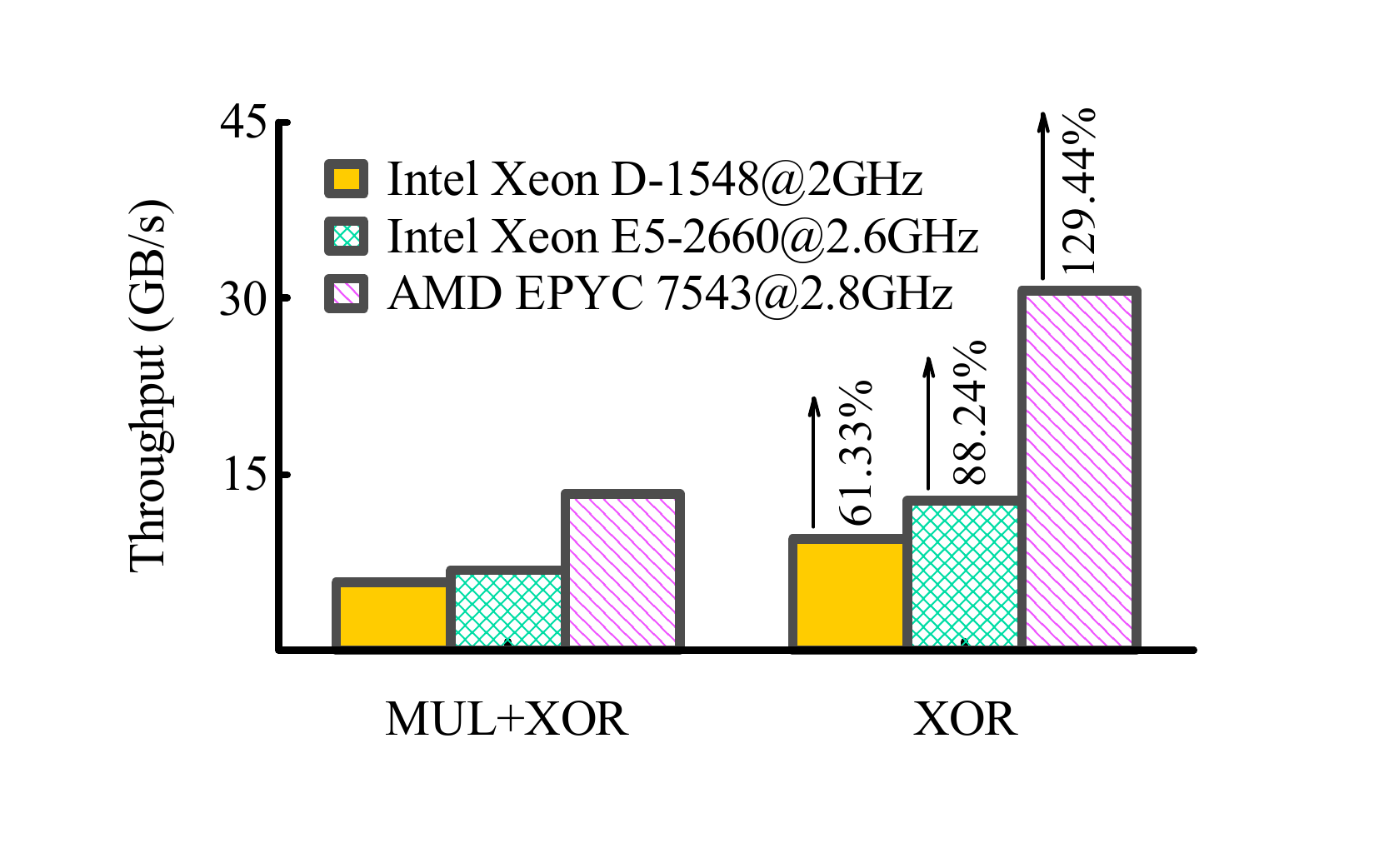}
}
\subfigure[Cost of Single-block Decoding]{
\centering
  \includegraphics[width=0.42\linewidth]{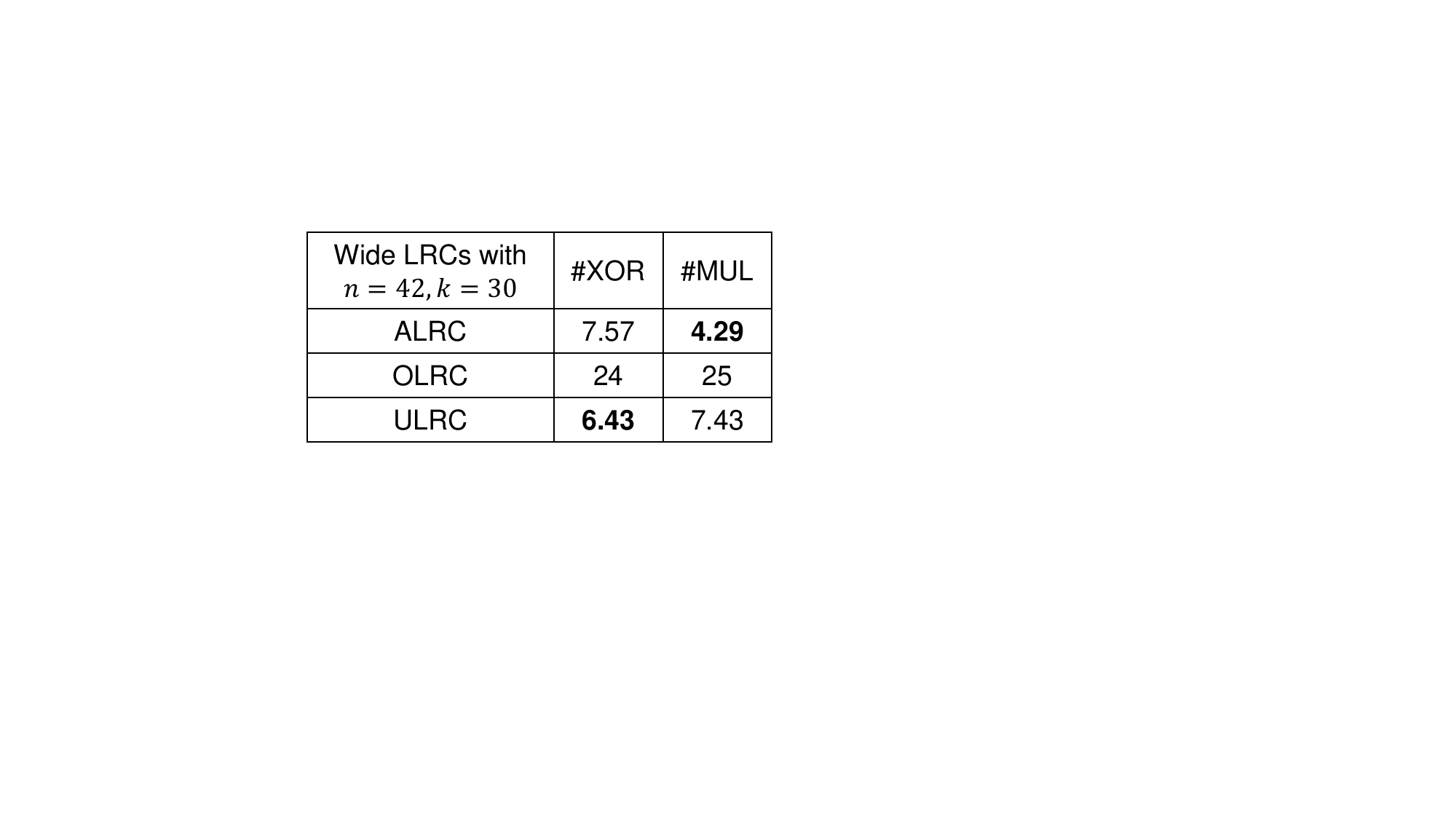}
}
\caption{Comparisons of \texttt{XOR} and \texttt{MUL} under coding computing. (a) Coding throughput on \texttt{XOR} and \texttt{MUL}.
(b) The average times of \texttt{XOR} and \texttt{MUL} for decoding a block with baseline LRCs.
  \label{fig:xor_mul}
}
\end{figure}

\textbf{Limitation \#3: Existing wide LRCs without achieving \textit{\texttt{XOR} locality} yet distance optimal).}
Figure \ref{fig:xor_mul}(b) shows the average times for decoding a failed block using \texttt{XOR} and \texttt{MUL} under baseline LRCs with the same parameters $n=42, k=30$. We observe that all baseline LRCs incur higher decoding times with \texttt{MUL} in single-block decoding. The underlying reason is that the multiple all-1's rows in the generator matrix impose more limitations on fault tolerance, i.e., the code distance \( d \).
However, existing distance-optimal code designs often prioritize code distance, sacrificing \textit{\texttt{XOR} locality} in local group coding \cite{kadekodi2023practical,tamo2014family,sathiamoorthy2013xoring}.
As a result,  it is desirable to have a coupled design on local and global parity blocks, 
for both \textit{\texttt{XOR} locality} and distance optimality.

\begin{table}[!t]
  \centering
  \caption{A detailed comparison of UniLRC with other practical LRCs from Microsoft \cite{huang2012erasure} (ALRC) and Google \cite{kadekodi2023practical}(OLRC and ULRC). Notation: "+", "-" and "$\pm$"  mean "best", "worst" and "in between".}
\begin{tabular}{p{2.7cm}<{\centering}|p{0.9cm}<{\centering}p{0.9cm}<{\centering}p{0.9cm}<{\centering}p{1cm}<{\centering}}
  \hline
 Properties & ALRC &  OLRC & ULRC & \textbf{UniLRC}  \\
  \hline
  Recovery locality & $\pm$ & - & $\pm$ & \textbf{+}  \\

  Topology locality & $\pm$ & - & $\pm$ & \textbf{+}  \\

  \texttt{XOR} locality & $\pm$  & - & - & \textbf{+}  \\

  Distance optimal  & -  & + & - & \textbf{+} \\
  \hline
\end{tabular}

\label{tab:comparisons_existed_solutions}
\end{table}

The main objective of UniLRC is to provide a \textit{unified locality} encompassing  \textit{recovery locality}, \textit{topology locality} and \textit{\texttt{XOR} locality}. 
The core idea of UniLRC is to integrate locality limitations directly into the construction of the generator matrix.
As shown in Table \ref{tab:comparisons_existed_solutions}, UniLRC outperforms the state-of-the-art LRCs in terms of \textit{recovery locality}, \textit{topology locality} and \textit{\texttt{XOR} locality}, while also achieving distance optimality.

\section{Construction of UniLRC} \label{sec:construction}
In this section, we give an overview of UniLRC in \cref{sub:overview}, followed by its construction steps in \cref{subsection:construction} and optimality analysis in \cref{subsec:optimal_analysis}.

\subsection{Overview}\label{sub:overview}
\begin{figure}[h!]
    \centering
    \includegraphics[scale=0.38]{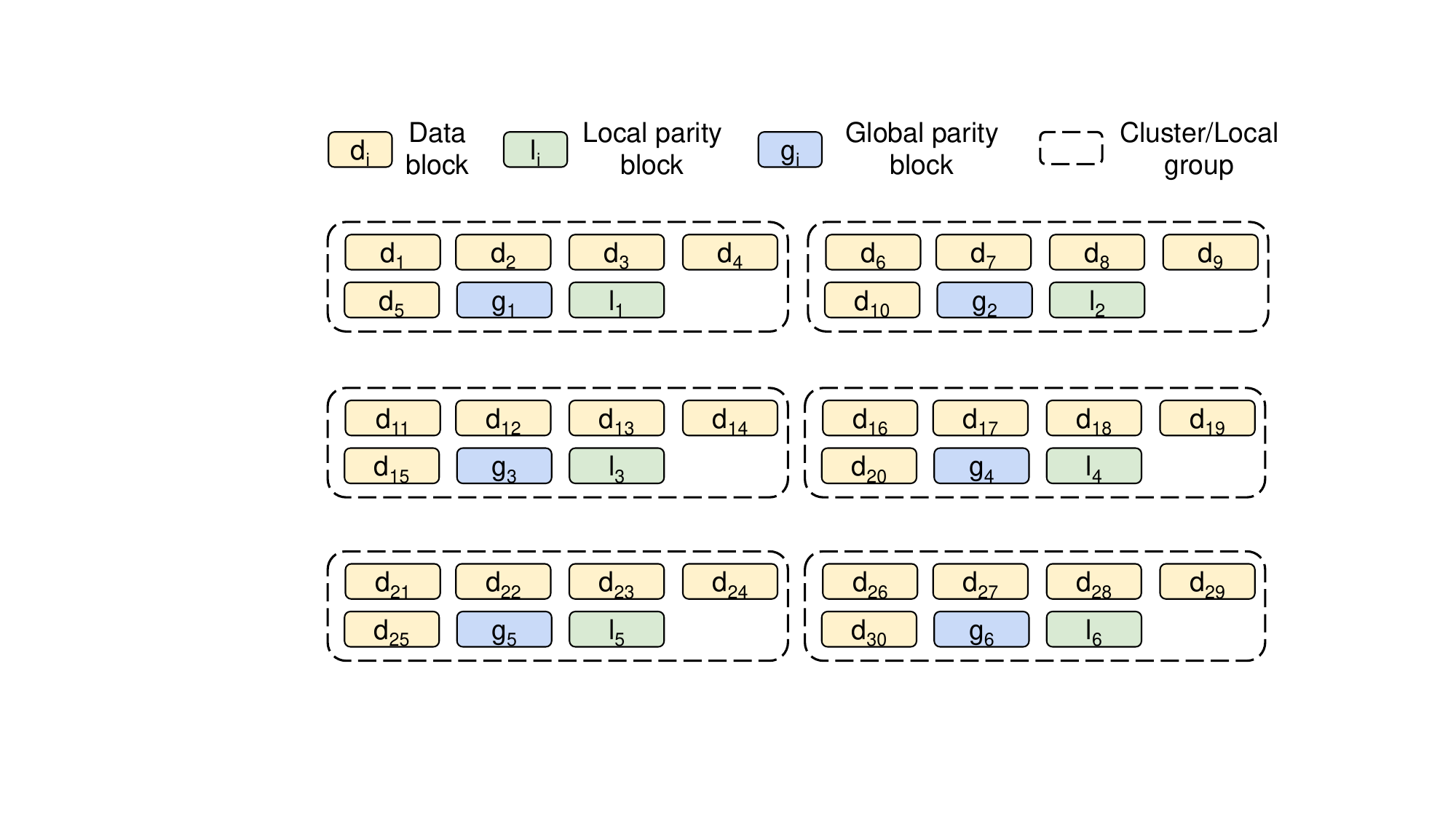}
    \caption{An example of the UniLRC$(42,30,6)$. A local group is assigned to a single cluster, with all data blocks, as well as local and global parity blocks, uniformly distributed across clusters. Local parity blocks are generated using \texttt{XOR}.
    This configuration tolerates up to any $g+1 = 7$ block failures and one cluster failure.}
    \label{fig:unilrc_overview_ex}
\end{figure}

Figure \ref{fig:unilrc_overview_ex} illustrates an example of the UniLRC($n=42, k=30, r=6$) deployed on a DSS with $6$ clusters, 
where each local group mapped into one cluster. 
The data blocks, local parity blocks and global parity blocks are uniformly distributed across $6$ local groups, with each group containing $5$ data blocks, 1 local parity block and 1 global parity block.
The number of global parity blocks is a multiple of the number of clusters.
Global parity blocks are generated using all data blocks, e.g, $g_1 = \sum_1^{30}\alpha_i d_i$, where \( \alpha_i \) represents the encoding coefficients.
Each local parity block is located within a single cluster and generated by \texttt{XOR} operations on the
global parity block and data blocks located within the local group, e.g, $l_1 = \texttt{XOR}\{d_1,\dots,d_5,g_1\}$.

UniLRC achieves the minimum \textit{recovery locality} with $\bar{r} = 6$, addressing   Limitation \#1. 
It maximizes data parallelism during normal read by uniformly accessing $5$ data blocks from any cluster.
Additional, UniLRC ensures zero cross-cluster traffic during single-block recovery.
For example, recovering $d_1$ involves the operation $ \texttt{XOR}\{d_2, d_3, d_4, d_5, l_1, g_1\}$, all of which are co-located in a single cluster.
Thus, UniLRC resolves Limitation \#2.
Furthermore, UniLRC tolerates up to any $g+1 = 7$ block failures and one cluster failure, satisfying the Singleton bound  (refer to  Theorem \ref{thm:singleton}), i.e., achieving the maximum possible fault tolerance. Thus, UniLRC resolves  Limitation \#3.


\subsection{Explicit Construction} 
\label{subsection:construction}
\textbf{Core idea.}
The key construction idea of UniLRC is to integrate \textit{recovery locality}, \textit{topology locality}, and \textit{\texttt{XOR} locality} into the generator matrix, which is based on a series of matrix transformations on the Vandermonde matrix. 
This is achieved by introducing an additional row from the generator matrix of MDS codes, inspired by the construction in \cite{kadekodi2023practical}.
The additional row is designed for local parity blocks, starting with a row of $1$'s, which is then carefully designed through splitting and coupling with the coefficients of the data and global parity blocks.


Recall that the generator matrix $G$ in Definition \ref{def:code_matrix} satisfies $G x = y$, where $x$ is the $k$ data blocks and $y$ is the $n$ blocks of the stripe. We have: 
\begin{equation*}  
G = \left [\begin{array} {c} I_{k} \\ \hline A \end{array}
\right] =
\left [\begin{array} {c} I_{k} \\ \hline \mathcal{G} \\ \hline \mathcal{L} \end{array}
\right],
\end{equation*} 
where $I_{k}$ is a $k\times k$ identity matrix, $A$ is an $(n-k)\times k$ matrix, and $\mathcal{G}, \mathcal{L}$ are submatrices derived from splitting $A$.
Given two positive integer coefficients $\alpha$ (the scale coefficient) and $z$ (the number of clusters), 
we start with a Vandermonde matrix $O$ as the initial generator matrix of order $(\alpha z + 1) \times k$, where $k=\alpha z(z-1)$ :  
$$\begin{bmatrix}
    1 & 1 & \ldots & 1 \\
    g_1 & g_2 & \ldots & g_k \\
    \vdots & \vdots & \ddots & \vdots \\
    g_1^{\alpha z-1} & g_2^{\alpha z-1} & \cdots & g_k^{\alpha z-1} \\
    g_1^{\alpha z} & g_2^{\alpha z} & \ldots & g_k^{\alpha z}
\end{bmatrix}.
$$
The construction process consists of four steps as follows.

\noindent\textbf{Step 1: Splitting the Vandermonde matrix into two parts.}  First, split $O$ into an $\alpha z \times k$ Vandermonde submatrix $\mathcal{G}$ and row vector $l$:
$$\mathcal{G} = \begin{bmatrix}
    g_1 & g_2 & \ldots & g_k \\
    \vdots & \vdots & \ddots & \vdots \\
    g_1^{\alpha z} & g_2^{\alpha z} & \ldots & g_k^{\alpha z} \\
\end{bmatrix}, l = (1, 1, \ldots, 1).$$ 
The submatrix $\mathcal{G}$ is used for generating global parity blocks.

\noindent\textbf{Step 2: Splitting  red all 1 vector into multiple groups.} 
Next, split the $k$ 1’s in the vector $l$ into $z$ equal groups, that is, $\bm{l = (l_{1}, l_{2}, \ldots, l_{z})}$, where each $l_i$ is a length-$\frac{k}{z}$ all 1 vector $(1, 1, \ldots, 1)$. Then we construct a $z \times k$ matrix $L$ as follows, where  $\textbf{0}$ is a length-$\frac{k}{z}$ zero vector.
$$
L = 
\begin{bmatrix}
     \bm{l_{1}} & \textbf{0} & \ldots & \textbf{0} \\
    \textbf{0} & \bm{l_{2}} & \ldots & \textbf{0} \\
    \vdots & \vdots & \ddots & \vdots \\
    \textbf{0} & \textbf{0} & \ldots & \bm{l_{z}} \\
\end{bmatrix}.
$$

\noindent\textbf{Step 3: Combing the global parity blocks within each local group.}
Next, merge the $\alpha z \times k$ matrix $\mathcal{G}$ into a $z \times k $ matrix $\mathcal{G^*}$ by adding every $\alpha$ rows of  $\mathcal{G}$:
$$\mathcal{G^*} = \begin{bmatrix}
    s_{1,1} & s_{1,2} & \ldots & s_{1,k} \\
   \vdots & \vdots & \ddots & \vdots \\
    s_{z,1} & s_{z,2} & \ldots & s_{z,k} \\
\end{bmatrix},$$ 
where $s_{i,j}=\sum_{\gamma=1}^{\alpha}g_j^{(i-1)\alpha+\gamma}$ for $i\in [z], j\in [k]$, i.e., the $i$th row of $\mathcal{G^*}$ is the row sum of the rows of $\mathcal{G}$ from the $((i-1)\alpha+1)_{th}$ row to the $(i\alpha)_{th}$ row. In this way, the $\alpha$ global parity blocks in each local group is combined together.

\noindent\textbf{Step 4: Coupling global and local parity blocks.}
Finally, obtain the coding matrix $\mathcal{L}$ for local parity blocks which are generated by coupling the data  and global parity blocks within each local group, i.e. $\mathcal{L} = 
\mathcal{G^*} +
L$.


Based on the above construction, the complete generator matrix $G$ for UniLRC is as follows: 
\begin{equation*}
G =
\left [\begin{array} {c} I_{k} \\ \hline \mathcal{G} \\ \hline \mathcal{L} \end{array}
\right] =
\left [
\begin{array} {ccccc}
    1 & 0 & 0 & \ldots & 0 \\
    0 & 1 & 0 & \ldots & 0 \\
    0 & 0 & 1 & \ldots & 1 \\
    \vdots & \vdots & \vdots & \ddots & \vdots \\
    0 & 0 & 0 & \ldots & 1 \\
    \hline
    g_1 & g_2 & g_3 & \ldots & g_k \\
    \vdots & \vdots & \vdots & \ddots & \vdots \\
    g_1^{\alpha z} & g_2^{\alpha z} & g_3^{\alpha z} & \ldots & g_k^{\alpha z} \\
    \hline
     &  & \bm{\mathcal{L}_1} & & \\
     &  & \vdots & & \\
   &  & \bm{\mathcal{L}_z} & &  \\
\end{array}
\right ],
\label{equ:construction_unilrc}
\end{equation*}
where $\bm{\mathcal{L}_i}=\mathcal{G}^*_i+(\underbrace{0,\ldots,0}_{(i-1)k/z},\bm{l_i},0,\ldots,0)$  is the $i$th row of $\mathcal{L}$ and $\mathcal{G}^*_i$ is the $i$th row of $\mathcal{G}^*$, for any $i\in [z]$.


\textbf{UniLRC code parameters.} 
Given the number of clusters $z$  and the scale coefficient $\alpha $, we set $g=\alpha z$ to ensure
uniform distribution of the global parity blocks across groups. To tolerate a single cluster failure, we set $r=g$ (refer to Theorem \ref{thm:code_distance} for fault tolerance proof).
Therefore, each local group consists of $1$ local parity block, $\alpha$ global parity blocks and $\alpha z-\alpha$ data blocks.
In summary, UniLRC can be constructed for any parameter sets of the form: 
$(n = \alpha z^2+z, k = \alpha z^2-\alpha z, r=\alpha z)$ with positive integers $\alpha, z$.
The coding of UniLRC is defined over \textit{GF($\mathit{2^8}$)}, making it easy to implement at byte granularity.

\subsection{Optimality Analysis} \label{subsec:optimal_analysis}
We demonstrate that UniLRC exhibits optimality in terms of cost, fault tolerance, and performance.

\textbf{Code rate feasibility.}
The following theorem shows the code rate of UniLRC. 

\begin{theorem} \label{thm:code_rate_bound}
The rate of a UniLRC $(n = \alpha z^2+z, k = \alpha z^2-\alpha z, r=\alpha z)$ code satisfies
\[
\frac{k}{n} = \frac{r}{r+1} (1-\frac{1}{z}) = 1-\frac{\alpha+1}{\alpha z+1}.
\]
\end{theorem}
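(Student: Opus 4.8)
The statement is a pure arithmetic identity relating $n$, $k$, $r$, $\alpha$, and $z$, so the plan is simply to substitute the parameter definitions and simplify. Recall from the construction that UniLRC is built with $n = \alpha z^2 + z$, $k = \alpha z^2 - \alpha z$, and $r = \alpha z$. My first step is to verify the middle expression: write $\frac{k}{n} = \frac{\alpha z^2 - \alpha z}{\alpha z^2 + z} = \frac{\alpha z(z-1)}{z(\alpha z + 1)} = \frac{\alpha(z-1)}{\alpha z + 1}$, cancelling the common factor $z$ from numerator and denominator. Since $r = \alpha z$ we have $r + 1 = \alpha z + 1$, so $\frac{r}{r+1}(1 - \frac{1}{z}) = \frac{\alpha z}{\alpha z + 1}\cdot\frac{z-1}{z} = \frac{\alpha(z-1)}{\alpha z+1}$, which matches.

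For the last expression, I would start from $\frac{\alpha(z-1)}{\alpha z + 1}$ and subtract from $1$: $1 - \frac{\alpha(z-1)}{\alpha z+1} = \frac{(\alpha z + 1) - \alpha(z-1)}{\alpha z + 1} = \frac{\alpha z + 1 - \alpha z + \alpha}{\alpha z + 1} = \frac{\alpha + 1}{\alpha z + 1}$. Hence $\frac{k}{n} = 1 - \frac{\alpha+1}{\alpha z+1}$, completing the chain of equalities. This is all elementary algebra over the rationals; the only thing to be careful about is that $z \neq 0$ (true since $z$ is a positive integer) so the cancellation of $z$ is legitimate, and that $\alpha z + 1 \neq 0$ so the fractions are well-defined.

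There is essentially no main obstacle here — the result is a one-line computation once the parameter substitutions are made. If I wanted to add value beyond the bare identity, I would note the interpretation: the factor $\frac{r}{r+1}$ is the rate one would get from a single local group of size $r+1$ with one local parity, and the factor $(1 - \frac{1}{z})$ is the additional rate loss from devoting one block per cluster (out of $z$ clusters' worth of data) to... actually, more precisely, from the $\alpha z$ global parities occupying a $\frac{1}{z}$-fraction-like portion; and the form $1 - \frac{\alpha+1}{\alpha z+1}$ exhibits the redundancy $n - k = \alpha z + z - (\alpha z^2 - \alpha z)$... rather, it shows the rate loss $\frac{\alpha+1}{\alpha z+1}$ shrinks to $0$ as $z \to \infty$ for fixed $\alpha$, which is the "wide code" regime of interest. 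I would present just the computation as the proof and relegate this commentary to a remark.
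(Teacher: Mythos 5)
Your computation is correct and is the obvious direct verification: substitute $n=\alpha z^2+z$, $k=\alpha z^2-\alpha z$, $r=\alpha z$, cancel the common factor $z$, and check all three expressions reduce to $\frac{\alpha(z-1)}{\alpha z+1}$. The paper states this theorem without proof, and your one-line algebra is exactly what is implicitly intended.
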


From the above theorem, the code rate is positively correlated with both the number of clusters $z$, and the scale coefficient $\alpha$.
Furthermore, if $r$ is large (which is common with large stripe width $n$) and $\frac{r}{r+1}$ approaches $1$, the code rate of Theorem \ref{thm:code_rate_bound} approaches $1-\frac{1}{z}$.
Therefore, in a DSS with $z$ clusters, the code rate of UniLRC is limited by the cluster scale with sufficient stripe width.

\begin{figure}[t!]
    \centering
    \includegraphics[scale=0.26]{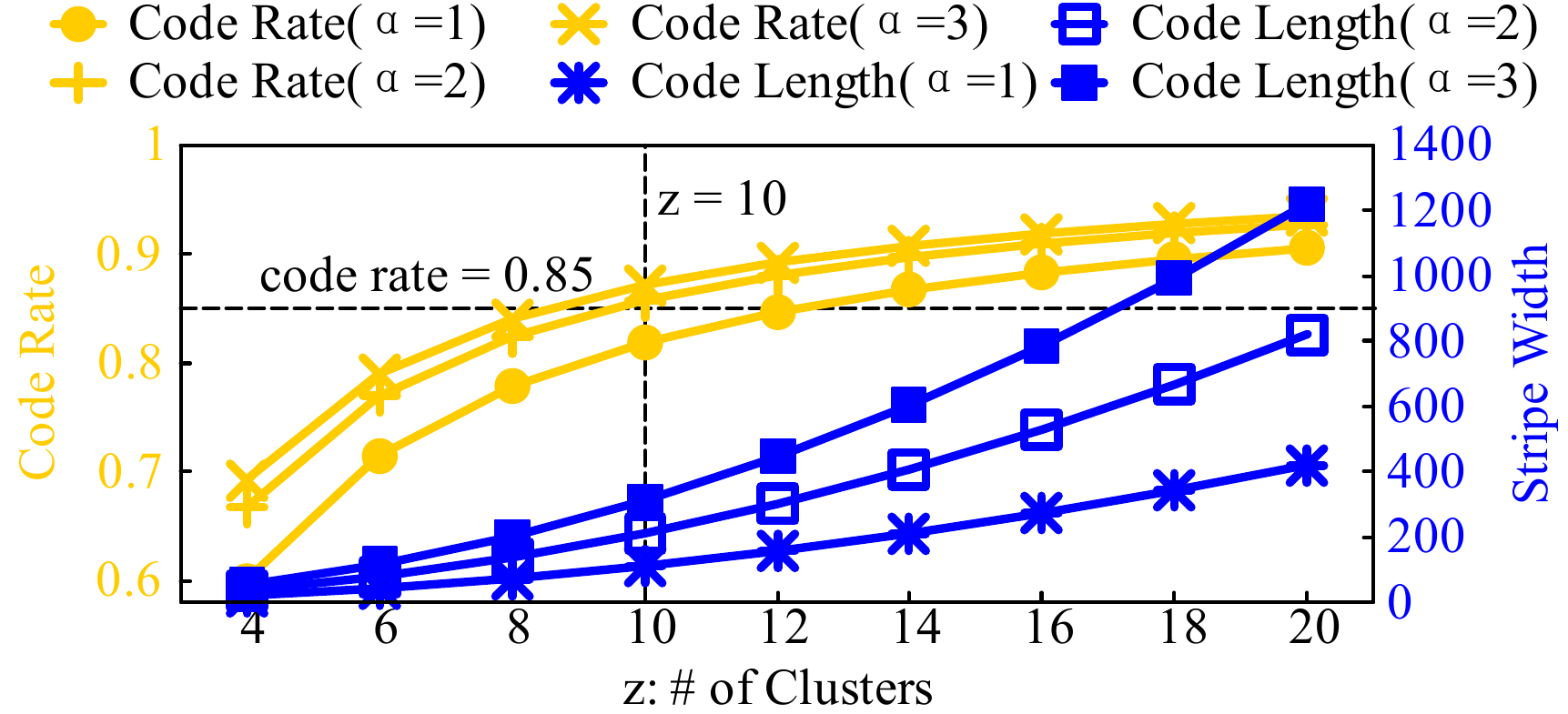}
    \caption{Trade-off on cluster number, scale coefficient, code rate and stripe width for UniLRC with $z \leq 20$ and $\alpha = 1,2,3$.}
    \label{fig:unilrc_codeinfo}
\end{figure}

\textbf{Practical considerations on code rate and stripe width.} 
In known wide-code settings in the industry, the desirable code rate is $\geq 0.85$, while the stripe width typically ranges from $25$ to $504$ \cite{kadekodi2023practical, VAST}.
We show the trade-off between the number of clusters ($z$), scale coefficient ($\alpha$), code rate ($k/n$), and stripe width ($n$) in Figure \ref{fig:unilrc_codeinfo}. UniLRC easily achieves the target setting when $z\geq 10$, which corresponds to a moderate number of clusters in Google storage systems \cite{kadekodi2023practicaltos}.
For example, for $z=10, \alpha=2$, the rate of UniLRC ($210,180,20$) achieves $85.71\%$ with the stripe width $n=210$.
Therefore, UniLRC demonstrates a broad range of code rate feasibility based on practical system settings.

\textbf{Discussion.} For small-scale DSSs ($z\leq8$),  achieving a $0.85$ code rate with UniLRC's ``one local group, one cluster'' construction is challenging.
 However, we can relax this construction to a ``one local group, $t$ clusters''.
This method efficiently reduces the number of local parity blocks, thereby achieving a higher code rate. Although this method introduces cross-cluster traffic during degraded read and reconstruction, it only involves $t-1$ blocks of cross-cluster traffic, where $t=z/l$ and $l$ is the total number of local groups.

\textbf{Fault Tolerance Optimality.}
We now show that the minimum distance of UniLRC is exactly $r+2$.
\begin{theorem} \label{thm:code_distance}
    Given two positive integers $z$ (the number of clusters) and $\alpha$ (the scale coefficient), the minimum distance of UniLRC$(n = \alpha z^2+z, k = \alpha z^2-\alpha z, r=\alpha z)$ is exactly $r+2$. 
\end{theorem}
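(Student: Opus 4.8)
The plan is to prove the two inequalities $d \le r+2$ and $d \ge r+2$ separately. The upper bound $d \le r+2$ is immediate from the Singleton bound of Theorem~\ref{thm:singleton}: with the UniLRC parameters $n = \alpha z^2 + z$, $k = \alpha z^2 - \alpha z$, $r = \alpha z$, we have $\lceil k/r \rceil = \lceil (z-1)(\alpha z)/(\alpha z)\rceil = z-1$, so $d \le n-k-\lceil k/r\rceil+2 = (\alpha z^2+z) - (\alpha z^2-\alpha z) - (z-1) + 2 = \alpha z + 1 = r + 1$\,---\,wait, this needs care: recomputing, $n-k = \alpha z + z$, minus $(z-1)$ gives $\alpha z + 1$, plus $2$ gives $\alpha z + 3$. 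I would recheck the arithmetic against the reformulated optimality condition $n - k - n/(r+1) = d-2$ stated in Theorem~\ref{thm:singleton}, noting $(r+1)\mid n$ here since $n = z(\alpha z+1) = z(r+1)$, which gives $d - 2 = (\alpha z + z) - z = \alpha z = r$, i.e. $d \le r+2$. So the upper bound is a one-line consequence of distance optimality, and the real content is the matching lower bound.

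For the lower bound $d \ge r+2$, I would show that every nonzero codeword has Hamming weight at least $r+2$; equivalently, that any $r+1$ coordinates can be erased and still corrected, i.e. every $(n-r-1)\times(n-r-1)$ submatrix obtained by deleting $r+1$ columns of a parity-check matrix has full rank\,---\,or, working directly with the generator matrix $G$, that no nonzero message $x \in \mathbb{F}_q^k$ produces a codeword $Gx$ supported on fewer than $r+2$ coordinates. The key structural facts to exploit are: (i) the top block $I_k$ forces $\mathrm{supp}(x)$ itself into the support, so if $x$ has $w$ nonzero entries the codeword already has weight $\ge w$ from the systematic part; (ii) the global-parity block $\mathcal{G}$ is a genuine Vandermonde matrix of $\alpha z = r$ rows in distinct evaluation points $g_1,\dots,g_k$, so any $r$ columns of $\mathcal{G}$ are linearly independent\,---\,hence if $\mathrm{supp}(x)$ has size $\le r$, then $\mathcal{G}x \ne 0$, and in fact one can say more about how many entries of $\mathcal{G}x$ vanish; (iii) the local block $\mathcal{L} = \mathcal{G}^* + L$ couples each local group's global-parity combination with an all-ones vector on that group's coordinates.

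The argument I would run is a case split on $w = |\mathrm{supp}(x)|$. If $w \ge r+2$ we are done by (i). If $w \le r$, then by the Vandermonde property (ii) the vector $(\mathcal{G}x)$ of length $\alpha z = r$ cannot be identically zero; more precisely, since any $r$ columns of the $r\times k$ Vandermonde $\mathcal{G}$ are independent, and $x$ lies in a $\le r$-dimensional coordinate subspace, I would argue the nonzero entries of $\mathcal{G}x$ together with $w$ already total at least $r+2$ (using that $\mathcal{G}$ restricted to any $w \le r$ columns injects and a degree/roots argument bounds how many of the $r$ Vandermonde-power rows can annihilate a fixed nonzero combination). The delicate middle case is $w = r+1$: here the systematic part contributes exactly $r+1$, and I must produce one more nonzero coordinate among the $g+l = \alpha z + z$ parity blocks. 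This is where the coupling in Step~4 matters: if $\mathrm{supp}(x)$ meets some local group in a way that makes all of that group's local parity $\mathcal{L}_i x$ and the combined global parities $\mathcal{G}^*_i x$ vanish simultaneously, one derives (from $\mathcal{L}_i = \mathcal{G}^*_i + l_i$) that the all-ones combination $l_i x$ over that group vanishes, and then a counting/Vandermonde contradiction on the distribution of the $r+1$ nonzero positions across the $z$ groups should finish it. The main obstacle, and the step I expect to be genuinely technical, is precisely this $w = r+1$ case: controlling simultaneously the rank behavior of the Vandermonde rows $\mathcal{G}$ and the group-wise structure of $\mathcal{L}$, and in particular ruling out a conspiracy where the nonzero mass of $x$ is spread so that many parity coordinates cancel; I would handle it by choosing the evaluation points $g_1,\dots,g_k$ generically (or in the explicit $GF(2^8)$ field) so that the relevant structured determinants are nonzero, and by a careful pigeonhole on how $r+1$ nonzero coordinates can intersect the $z$ local groups of size $\alpha z - \alpha + 2$ each.
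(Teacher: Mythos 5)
Your framing (prove $d\le r+2$ and $d\ge r+2$ separately, with the lower bound as the substance) is the right shape, and it is the dual of what the paper actually does: the paper works with the parity-check matrix $H=[A\mid P_G\mid P_L]$ and shows any $r+1$ columns of $H$ are linearly independent (via elementary row reductions that peel off identity blocks $I_b$, $I_c$ from the parity columns and leave a full-rank Vandermonde block $V_a$ for the data columns), while also exhibiting $r+2$ dependent columns for the upper bound. Working instead on the generator side with codeword weights is a legitimate equivalent route. But your execution of the lower bound has a concrete gap in exactly the case you thought was safe.

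The claim ``if $w\le r$ then $w + |\mathrm{supp}(\mathcal{G}x)|\ge r+2$'' is false. Take $w=1$, $x=e_j$: every entry of $\mathcal{G}e_j=(g_j,g_j^2,\dots,g_j^{\alpha z})$ is nonzero (the $g_j$'s are nonzero), so $|\mathrm{supp}(\mathcal{G}x)|=r$ and $w+|\mathrm{supp}(\mathcal{G}x)|=r+1<r+2$. Similarly for $w=2$ a roots-type argument only forbids $w$ consecutive zero positions among $(\mathcal{G}x)_i$, so $|\mathrm{supp}(\mathcal{G}x)|$ can drop to roughly $\lceil r/2\rceil$, far below what you need. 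The missing weight must come from $\mathcal{L}x$, and you cannot defer the $\mathcal{L}$ block to the $w=r+1$ case only: it has to be invoked already for small $w$. (For $w=1$ it does save you: writing $s_{i,j}=g_j^{(i-1)\alpha}(g_j+\dots+g_j^\alpha)$, if the home-group entry $s_{i_0,j}+1$ vanished then $s_{i_0,j}=1$, forcing the other group entries $s_{i,j}=g_j^{(i-i_0)\alpha}\neq 0$, so $\mathcal{L}e_j\neq 0$. But this is a $\mathcal{L}$-based argument, not a $\mathcal{G}$-only one.) The $w=r+1$ case you flag as delicate is indeed where the real work lives — the paper handles it cleanly on the parity-check side precisely because the local-parity constraint row-reduces to a simple group-wise XOR indicator, giving an $I_c$ block; your plan to ``choose points generically and pigeonhole'' is not yet a proof. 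Finally, a small warning on the upper bound: the plain Singleton form $d\le n-k-\lceil k/r\rceil+2$ gives $r+3$ here (your first arithmetic was right); the sharper value $r+2$ comes from the $(r+1)\mid n$ refinement $d\le n-k-\frac{n}{r+1}+2$, and you should cite that as a bound rather than as a restatement of ``optimality,'' since invoking optimality before proving $d=r+2$ would be circular. The paper sidesteps this entirely by exhibiting $r+2$ dependent columns of $H$ directly.
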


\textit{Sketch of Proof.}
Recall that from the code matrix in Definition \ref{def:code_matrix}, given the message $x$ of code $C$ and $n \times k$ generator matrix  $G$, we have $Gx = y$ where $y$ is the codeword.
In particular, there is also another code matrix, an $(n-k)\times n$ matrix  $H$, called a parity check matrix for $C$, which is defined as $$ C = \{x \in \mathbb{F}_q^n | Hx^T = 0\},$$
where $H$ can be derived from $G$.
From the generator matrix of \cref{equ:construction_unilrc}, we obtain the parity check matrix of UniLRC, given by 
$$H = \left[\begin{array}{c|c}
        A & I_{n-k}
    \end{array}\right], \text{ and }
    A = \left[\begin{array}{c}
        \mathcal{G}  \\
        \hline
        \mathcal{L}
    \end{array}\right]. $$
We aim to show that \textit{the parity check matrix $H$ has a set of $r+2$ linearly dependent columns but no set of $r + 1$ linearly dependent columns}.

First, we prove that any $r + 1$ columns are linearly independent.
To show this clearly, we split $H$ as three parts:  
$$ H =
\left [ \begin{array}{c|c|c}
    A & P_{G} & P_{L}
\end{array} \right], 
$$
where $P_{G}$ and $P_{L}$ corresponding to  global parity blocks and local parity blocks, respectively.
Given a matrix $T$ consisting of $r + 1$ columns from $H$, where $r + 1 = a+b+c$ means that we select $a,b$ and $c$ columns from $A, P_{G}$ and $P_{L}$, respectively.
Then we can simplify $T$ using elementary transformation in the following cases:
$$
T =
\begin{cases}
\left [ \begin{array}{c}
    V_a \\
    \hline
    O
\end{array} \right], & \text{if } a = r+1, b=c=0; \\
\left [ \begin{array}{c|c}
    V_a & O\\
    \hline
    O &I_{b}
\end{array} \right], & \text{if } a + b = r+1, b \neq 0, c=0; \\
\left [ \begin{array}{c|c|c}
    V_a & O & O\\
    \hline
    O & I_{b} & O \\
    \hline 
    O & O & I_{c}    
\end{array} \right], & \text{if } a + b + c = r+1, a \neq 0, b\neq 0, c\neq 0;
\end{cases}
$$
where $V$ is the Vandermonde matrix, $I$ is the identity matrix and $O$ is zero matrix, respectively.
From these cases, it is easy to verify $rank(T) = a + b + c= r + 1$, which proves the $r+1$ columns are linearly independent.

Second, it is easy to show that any set of $r+2$ columns is linearly dependent. For example, when $a = r+1, b=c=0$, we add a new information column to $T$ and simplify it through elementary transformations, obtaining:
$$T=\left [ \begin{array}{c}
    V \\
    \hline
    O
\end{array} \right]. $$
However, the $rank(T) = rank(V) = r + 1$, indicating that the $r + 2$ columns in $T$ are linearly dependent. \qed

Theorem \ref{thm:distance_optimal} shows that UniLRC is distance optimal, achieving the maximum possible fault tolerance.

\begin{theorem} \label{thm:distance_optimal}
     Given two positive integers $z$ (the number of clusters) and $\alpha$ (the scale coefficient), the UniLRC$(n = \alpha z^2+z, k = \alpha z^2-\alpha z, r=\alpha z)$ is distance optimal.
\end{theorem}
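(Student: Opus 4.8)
The plan is to obtain Theorem~\ref{thm:distance_optimal} as an immediate consequence of Theorem~\ref{thm:code_distance} and the reformulated optimality criterion in Theorem~\ref{thm:singleton}. Theorem~\ref{thm:code_distance} already pins down the minimum distance exactly as $d = r+2$, where $r = \alpha z$; so nothing further about the code structure is needed, and the remaining task is purely to check that the parameter triple $(n,k,r) = (\alpha z^2 + z,\ \alpha z^2 - \alpha z,\ \alpha z)$ meets the optimality condition.

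First I would verify the hypotheses of the reformulated criterion. The divisibility $(r+1)\mid n$ holds because $n = \alpha z^2 + z = z(\alpha z + 1) = z(r+1)$; this also gives $\tfrac{n}{r+1} = z$. The inequality $r \ge d-2$ holds as well, indeed with equality, since $d - 2 = r$. Next I would check the optimality equation itself: $n - k = (\alpha z^2 + z) - (\alpha z^2 - \alpha z) = \alpha z + z$, so that $n - k - \tfrac{n}{r+1} = (\alpha z + z) - z = \alpha z = r = d - 2$. Since $(r+1)\mid n$, $r \ge d-2$, and $n - k - \tfrac{n}{r+1} = d - 2$ all hold, Theorem~\ref{thm:singleton} yields that UniLRC is distance optimal, i.e., it attains the largest minimum distance, and hence the largest fault tolerance, allowed for a code of this length, dimension, and locality.

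Because the genuinely nontrivial part --- determining $d$ exactly via the rank analysis of the parity-check matrix --- is already handled in Theorem~\ref{thm:code_distance}, this proof is essentially bookkeeping and I do not anticipate a real obstacle. The only point I would take care to state explicitly is that the UniLRC parameter family is constructed so that $(r+1)\mid n$, which is precisely the hypothesis that licenses using the reformulated form of the LRC Singleton bound from Theorem~\ref{thm:singleton}; once that is noted, every remaining step is a one-line substitution of the defining parameters $n = \alpha z^2 + z$, $k = \alpha z^2 - \alpha z$, $r = \alpha z$.
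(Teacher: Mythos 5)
Your proposal is correct and follows exactly the paper's route: it appeals to the exact distance $d=r+2$ from Theorem~\ref{thm:code_distance} and verifies the reformulated Singleton criterion of Theorem~\ref{thm:singleton}, namely $(r+1)\mid n$, $r\ge d-2$, and $n-k-\tfrac{n}{r+1}=d-2$. You simply spell out the parameter substitutions that the paper leaves implicit, which is a welcome but not substantively different elaboration.
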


\begin{proof}
    UniLRC$(n = \alpha z^2+z, k = \alpha z^2-\alpha z, r=\alpha z)$ with $(r+1)|n$ and $r\geq d-2$, satisfies the condition of the distance optimal in Theorem \ref{thm:singleton}.
\end{proof}

\textbf{\textit{Recovery Locality} Optimality.}
We now demonstrate that UniLRC achieves optimal \textit{recovery locality}.

\begin{theorem} \label{thm:locality}
    The UniLRC$(n = \alpha z^2+z, k = \alpha z^2-\alpha z, r=\alpha z)$ achieves the minimum \textit{recovery locality}.
\end{theorem}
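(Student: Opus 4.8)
### Proof Proposal for Theorem~\ref{thm:locality}

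\textbf{Plan overview.} The plan is to establish a lower bound on the average recovery locality $\bar{r}$ that holds for \emph{any} LRC with the same code parameters $(n,k,r)$ that is also distance optimal, and then verify that UniLRC attains this bound with equality. The key observation is that in UniLRC every codeword symbol---whether a data block, a local parity block, or a global parity block---sits in a local group of size exactly $r+1$, so recovering any single block costs exactly $r$ reads from within its group. Thus $\bar{r} = r$ for UniLRC, and it suffices to argue that no distance-optimal LRC with locality parameter $r$ can achieve $\bar{r} < r$.

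\textbf{Step 1: Reduce to a per-symbol bound.} First I would recall that the locality parameter $r$ is, by definition (Definition of $(n,k,r)$-LRC), an \emph{upper} bound on the number of other blocks needed to recover any given block; hence $\bar{r} \le r$ trivially, and the content of the theorem is the matching lower bound $\bar{r} \ge r$ together with the fact that UniLRC is distance optimal. For the lower bound I would argue that if some block $x_i$ could be recovered from fewer than $r$ other blocks, say from a set $S$ with $|S| \le r-1$, then $\{i\}\cup S$ would be a recovery set of size at most $r$, which combined with the structure forced by distance optimality would push the true locality below $r$; I would then invoke Theorem~\ref{thm:singleton}, since a smaller effective locality $r' < r$ changes the Singleton bound to $d \le n-k-\lceil k/r'\rceil + 2 < n-k-\lceil k/r\rceil+2$ in the regime $(r+1)\mid n$, contradicting the assumption that the code meets the bound with the stated $r$. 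The cleanest route is to show that distance optimality with $(r+1)\mid n$ forces the recovery structure to partition $[n]$ into $n/(r+1)$ disjoint groups each of size exactly $r+1$, each of which is an MDS-like subcode, so every symbol genuinely requires $r$ reads.

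\textbf{Step 2: Verify UniLRC meets the bound.} Next I would check directly from the explicit construction in \cref{equ:construction_unilrc} that the $n = \alpha z^2 + z$ symbols partition into $z$ groups, the $i$th group consisting of the $k/z = \alpha z - \alpha$ data blocks indexed in the $i$th block, the $\alpha$ global parity blocks combined in row $i$ of $\mathcal{G}^*$, and the single local parity block $\mathcal{L}_i$; this group has size $(\alpha z - \alpha) + \alpha + 1 = \alpha z + 1 = r+1$. Because $\mathcal{L}_i = \mathcal{G}^*_i + (\mathbf{0},\dots,\bm{l_i},\dots,\mathbf{0})$ is a nontrivial linear combination of exactly the other $r$ symbols in its group (and symmetrically each data block and each combined global parity symbol is recoverable from the remaining $r$ in its group via the same relation), every block has a recovery set of size exactly $r$ inside its own group. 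Hence $\bar{r} = \frac{1}{n}\sum_{i} r = r$, matching the lower bound.

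\textbf{Main obstacle.} The hard part will be Step~1: making rigorous the claim that \emph{distance optimality plus $(r+1)\mid n$ forces a clean partition into size-$(r+1)$ recovery groups}, so that no averaging trick can beat $\bar{r} = r$. The subtlety is that a general LRC might have overlapping or nested recovery groups, or might recover some "cheap" symbols (e.g.\ a symbol that equals a single other symbol, giving locality $1$) while paying more elsewhere; I must rule out that such heterogeneity can lower the \emph{average} below $r$ while preserving the Singleton-bound-meeting distance. I would handle this by appealing to the structural characterization of distance-optimal LRCs in the regime $(r+1)\mid n$ (as already cited in Theorem~\ref{thm:singleton} via \cite{gopalan2012locality,tamo2014family}): the supports of the local groups must form an exact partition of $[n]$ into parts of size $r+1$, and each part must itself be an MDS code of dimension $r$, forcing every erasure within a part to require all remaining $r$ symbols of that part. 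Granting that characterization, the averaging argument closes immediately, and UniLRC's construction is seen to realize the extremal configuration.
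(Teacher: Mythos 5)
Your proposal takes a genuinely different route from the paper's proof. The paper does \emph{not} invoke any structural characterization of distance-optimal LRCs. Instead, it argues from the \emph{topology} constraint: to tolerate the loss of a whole cluster containing $n/z$ blocks, any code must have $d \geq n/z + 1$; combining this with the distance-optimal relation $d = r+2$ (Theorem~\ref{thm:code_distance}) gives $r = d-2 \geq n/z - 1 = \alpha z$, so $\alpha z$ is the smallest admissible locality parameter for a code with these $n,z$ that is cluster-fault-tolerant and distance optimal. Since UniLRC attains $r = \alpha z$ with every local group of size exactly $r+1$, it has $\bar{r} = r$, and the paper concludes the recovery locality is minimum. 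Thus the paper's lower bound comes from $d \geq n/z + 1$, not from the Singleton bound alone, and the comparison class is implicitly ``LRCs over $z$ clusters tolerating one cluster failure,'' not ``all LRCs with parameters $(n,k,r)$.''

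Your approach instead tries to show $\bar{r} \geq r$ directly for any distance-optimal $(n,k,r)$-LRC with $(r+1)\mid n$, via the claim that distance optimality forces a partition of $[n]$ into disjoint size-$(r+1)$ MDS local groups. That is a reasonable and arguably more self-contained line of attack, and as you correctly flag in your ``Main obstacle,'' it requires citing or re-proving the structural theorem of Gopalan et al. for this regime; you should not leave that as an appeal to intuition, because without it a code could in principle have $\bar{r} < r$ by recovering some symbols cheaply. The paper sidesteps this entirely by lower-bounding the \emph{parameter} $r$ rather than the \emph{average} $\bar{r}$, at the cost of making the comparison conditional on the cluster-topology requirement $d \geq n/z + 1$ and leaving the step from ``minimum $r$'' to ``minimum $\bar{r}$'' implicit (it relies on $\bar{r} = r$ holding for UniLRC, which your Step~2 verifies but the paper only asserts). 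Neither argument is fully airtight as written, but they are logically distinct: yours is a pure coding-theoretic bound conditioned on distance optimality, the paper's is a system-level bound conditioned on single-cluster fault tolerance. If you keep your route, make the structural characterization a cited lemma; if you want to match the paper, replace Step~1 with the $d \geq n/z + 1$ argument and drop the partition claim entirely.
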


\begin{proof}
From Theorem~\ref{thm:code_distance}, the minimum distance of UniLRC is $d=r+2$. To tolerate a single cluster failure, we require \(d \geq \frac{n}{z} + 1\), i.e.,  $r=d-2 \geq \frac{n}{z} - 1$. Then we have 
\[
r \geq \frac{n}{z}-1 = \frac{\alpha z^2 + z}{z} - 1 = \alpha z  = r.
\]  
Thus, we have demonstrated the minimum locality parameter \(r\).
Consequently, UniLRC, with $\bar{r} = r$, also achieves the minimum \textit{recovery locality}. 
\end{proof}

\section{System Design of UniLRC} \label{sec:system_unilrc}
We design a UniLRC system to efficiently support common basic operations (\cref{subsec:basic_ops}), and implement a prototype to support UniLRC deployment (\cref{subsec:implementation}).

\begin{figure}[t!]
    \centering
    \includegraphics[scale=0.28]{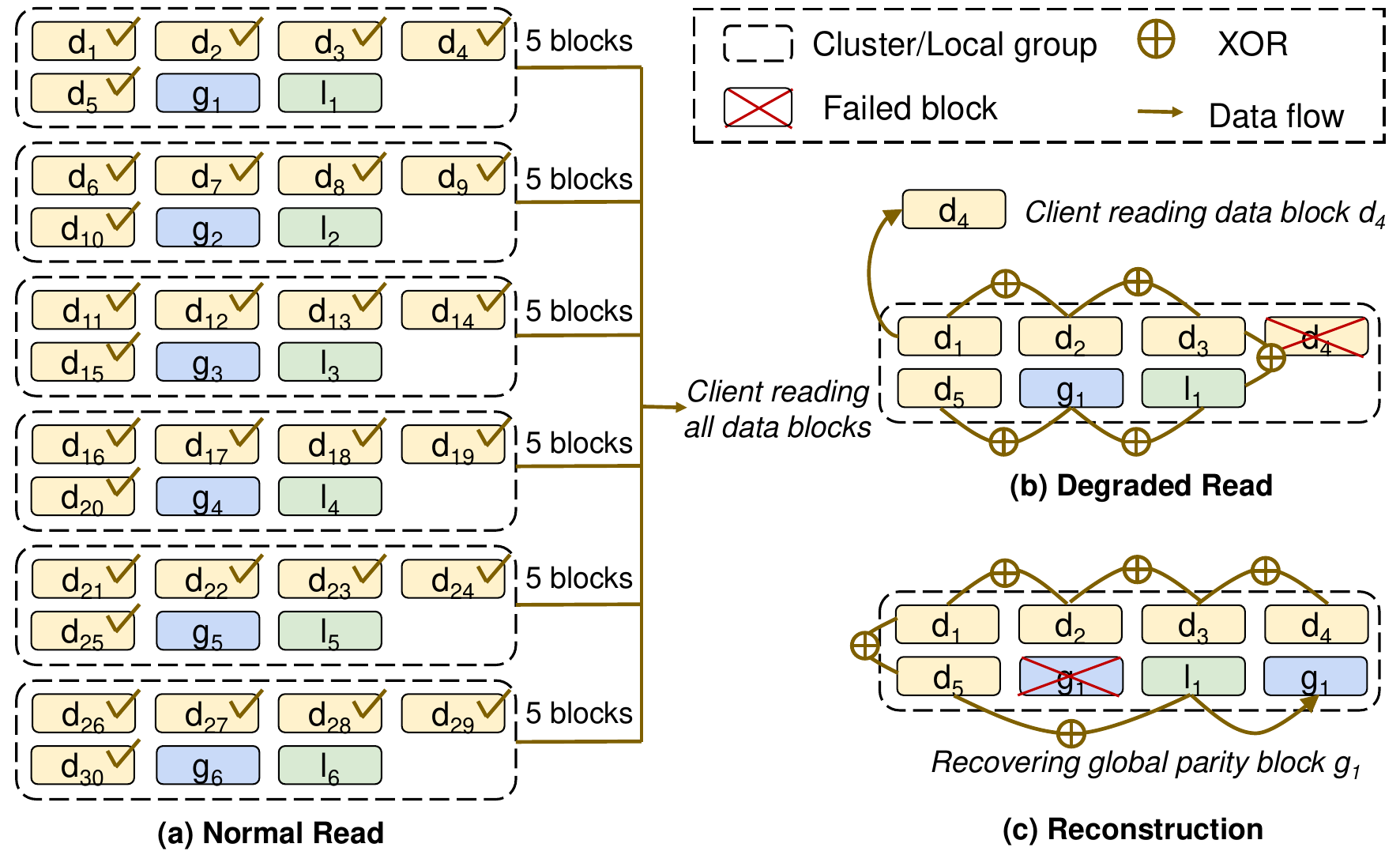}
    \caption{\textit{Normal read}, \textit{degraded read} and \textit{reconstruction} for UniLRC($42, 30, 6$).}
    \label{fig:unilrc_op_ex}
\end{figure}

\subsection{Common Basic Operations} \label{subsec:basic_ops}
We introduce the basic DSS operations of UniLRC, including \textit{normal read}, \textit{degraded read}, and \textit{reconstruction}, which are the most common operations in erasure-coded DSSs \cite{kadekodi2023practical,gan2025revisiting}.

\textbf{Normal read.} For \textit{normal read} operation, the client issues a read request for accessing all data blocks within one stripe. 

UniLRC distributes the $k$ data blocks evenly across all $z$ clusters, with each cluster holding $k/z$ data blocks.
Therefore, the UniLRC achieves optimal load balance across clusters, i.e., achieving the maximum read parallelism. 
Figure \ref{fig:unilrc_op_ex}(a) illustrates the \textit{normal read} operation for  UniLRC($42,30,6$), where all $30$ data blocks need to be read. Each cluster contributes to an equal number of blocks, i.e., $5$ blocks, resulting in a uniform distribution across $6$ clusters.
Below is the summarized property of \textit{normal read}.
\begin{tcolorbox}[colback=gray!5!white,colframe=gray!75!black,left=1mm, right=1mm, top=0.5mm, bottom=0.5mm, arc=1mm]
    \textbf{Property 1}: For \textit{normal read} operation, UniLRC achieves the maximum parallelism at the cluster level.
\end{tcolorbox}

The property guarantees that cross-cluster traffic for \textit{normal read} is evenly distributed across all clusters.

\textbf{Degraded read/Reconstruction.}
For \textit{degraded read} operation, the client request is issued to an unavailable data block, which is recovered from surviving blocks; and for \textit{reconstruction} (or single-block recovery), the DSS needs to recover a failed block, which could be either a data block or parity block.
For both \textit{degraded read} and \textit{reconstruction}, the main workflow is to recover a block by retrieving the surviving blocks within the same local group.

For recovering a failed block $b_f$ located in $group_j$, UniLRC retrieves the remaining blocks within the same local group and execute \texttt{XOR}, i.e.,  
$$b_f = \texttt{XOR}_{b_i\in group_j}b_i,$$ 
where $b_i$'s are the remaining blocks located in $group_j$, including data blocks, global parity blocks, and local parity blocks.
UniLRC distributes $z$ local groups across $z$ different clusters under the ``one local group, one cluster'' setting, so no cross-cluster traffic incurred, i.e., $\forall b_i\in group_j$.
Additionally, the \textit{recovery locality} $(\bar{r})$ of UniLRC is the minimum (as proven in Theorem \ref{thm:locality}).
Moreover, the encoding computation is performed via \texttt{XOR}, which avoids the high computational complexity of $GF$ multiplication.

Figure \ref{fig:unilrc_op_ex}(b) illustrates the \textit{degraded read} operation for  UniLRC ($42,30,6$), where the client performs a \textit{degraded read} of data block $d_4$. 
The decoding is  executed by XORing the surviving blocks $\{d_1, d_2, d_3, d_5, l_1, g_1\}$
and the \texttt{XOR}ed block is sent to the client.
This \textit{degraded read} operation incurs no cross-cluster traffic and only requires \texttt{XOR} calculations for $6$ blocks.
Figure \ref{fig:unilrc_op_ex}(c) shows the \textit{reconstruction} operation for  UniLRC($42,30,6$), where the the global parity block $g_1$ is to be recovered. Similar to the \textit{degraded read}, UniLRC \texttt{XOR}s the blocks $\{d_1, d_2, d_3, d_4, d_5, l_1\}$ within the same  local group, accessing only 6 blocks within the same cluster and resulting in no cross-cluster recovery traffic.
Below is the major property of \textit{degraded read} and \textit{reconstruction}.
\begin{tcolorbox}[colback=gray!5!white,colframe=gray!75!black,left=1mm, right=1mm, top=0.5mm, bottom=0.5mm, arc=1mm]
    \textbf{Property 2}: For \textit{degraded read} and \textit{reconstruction}, UniLRC achieves zero cross-cluster traffic, the minimum inner-cluster traffic and \texttt{XOR} decoding.
\end{tcolorbox}

The property guarantees (1) the minimum cross-cluster recovery traffic, i.e., zero cross-cluster traffic; (2) the minimum recovery traffic, i.e., accessing the minimum number of blocks, (3) the lowest decoding complexity, i.e., using only \texttt{XOR} decoding. In summary, combining \textit{Property 1} and \textit{Property 2}, we address the limitations (\#1, \#2, \#3) in \cref{sec:study_locality}.

\subsection{Implementation} \label{subsec:implementation}
\textbf{Prototype Architecture.} We develop a DSS prototype with UniLRC, as shown in Figure \ref{fig:implementation}.
The prototype consists of a coordinator, multiple clients, and multiple proxies, each managing several nodes. Clients upload data to the nodes and generate I/O requests, while the coordinator manages metadata (such as stripe-to-file and block-to-node mappings), stripe placement, and failure event handling. 
The proxy handles basic operations (such as \textit{normal read}, \textit{degraded read} and \textit{reconstruction}) and maintains the coding library. 
The coordinator and proxies monitor the cluster and server state information, respectively.
Our prototype is developed in C/C++ with around 12000 lines of code.

\begin{figure}[t!]
    \centering
    \includegraphics[scale=0.28]{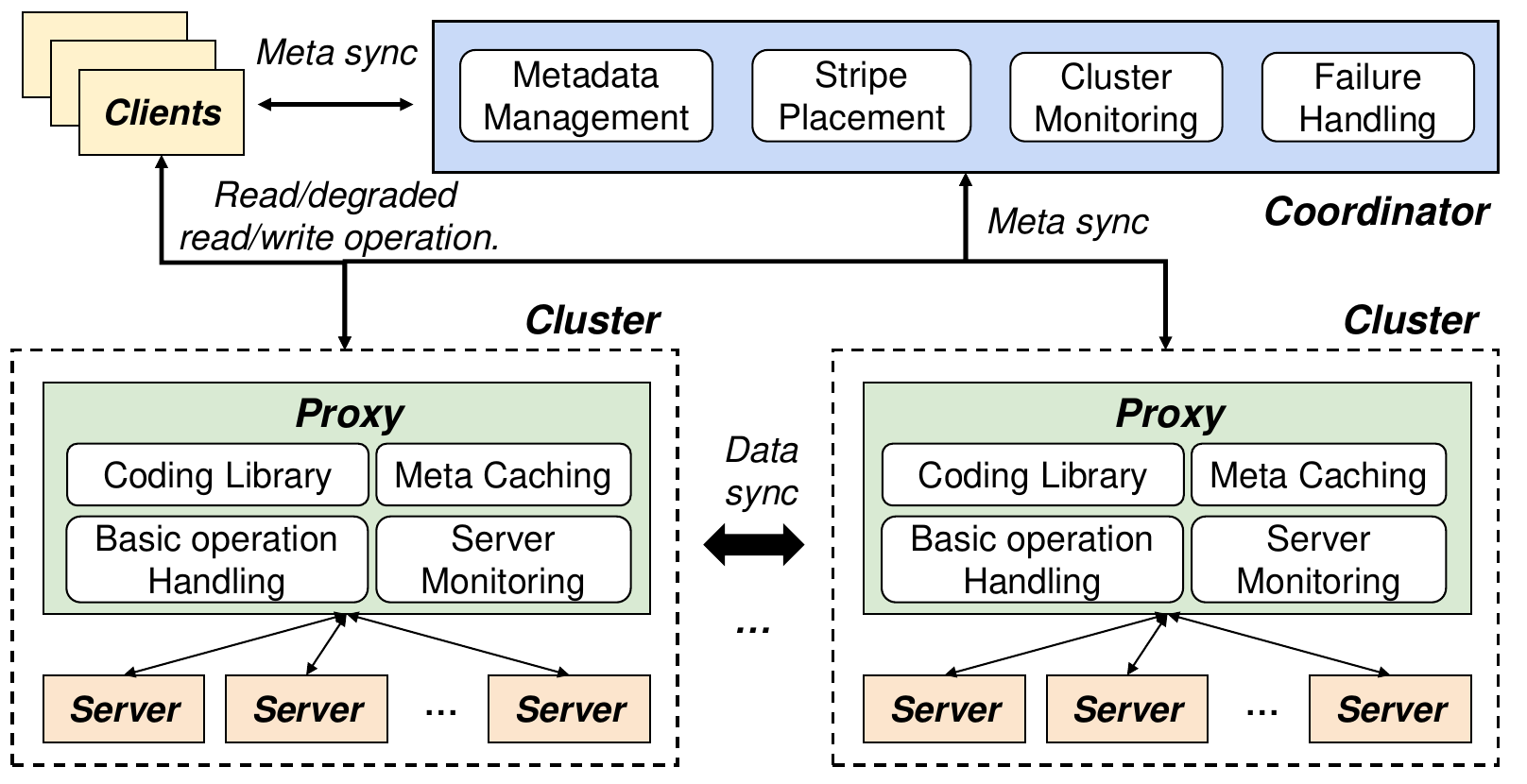}
    \caption{Prototype architecture of UniLRC.}
    \label{fig:implementation}
\end{figure}

\noindent\textbf{Coding Library.} We implement the UniLRC coding library using ISA-L \cite{isa_l}, a state-of-the-art acceleration library optimized with specialized instruction sets.
In our implementation, we employ the  \textit{GF($\mathit{2^8}$)} for coding computation with byte granularity.
UniLRC provides two simple coding APIs, \texttt{encode()} and \texttt{decode()}, and offers its coding functions as a shared library that can be dynamically linked into the address space of server processes, making it convenient for deployment in DSSs.

\noindent\textbf{Wide Stripe Deployment.} To support wide stripes  at multi-cluster deployment, we use a proxy node to simulate cluster management, with multiple processors corresponding to multiple servers. The prototype supports the number of clusters $z \leq 32$, and stripe width $n \leq 1024$, which aligning with the industry parameters \cite{kadekodi2023practicaltos,VAST}.

\section{Theoretical Analysis} \label{sec:the_ana}

\begin{table}[!t]
\centering
\caption{Code parameters for comparisons on wide LRCs.}
\begin{tabular}{p{1.5cm}<{\centering}|p{0.6cm}<{\centering}p{0.6cm}<{\centering}p{0.6cm}<{\centering}p{1.2cm}<{\centering}p{1.8cm}<{\centering}}
\hline
Scheme & $n$	&$k$	&$f$	&Rate	&UniLRC notes\\
\hline
30-of-42 & $42$	&$30$	&$7$ &$0.7143$ &$\alpha=1, z=6$	\\

112-of-136 & $136$	&$112$	&$17$ &$0.8235$	&$\alpha=2, z=8$\\

180-of-210 & $210$	&$180$	&$21$ &$0.8571$	&$\alpha=2, z=10$\\
\hline
\end{tabular}
\label{tab:wide_codes}
\end{table}

\textbf{Baseline codes and parameters.} We evaluate three representative LRCs that are widely deployed in practical systems, including Microsoft \cite{huang2012erasure} (ALRC) and Google \cite{kadekodi2023practical} (OLRC and ULRC). We demonstrate these codes with the $n=42, k=30$ parameters in Figure \ref{fig:baseline_lrc_example}.
We select three different stripe widths for wide LRCs, details of which are shown in Table \ref{tab:wide_codes}. For an apples-to-apples comparison, we fix the data size $k$, the code size $n$, and the fault tolerance requirement $f$, i.e., the ability to tolerate at least $f$ node failures and one cluster failure. The minimum distance $d$ of UniLRC, ULRC, and ALRC is equal to $d = f + 1$, while the OLRC cannot achieve $d = f + 1$ due to its construction limitations on large local groups.

\begin{figure*}
    \centering
    \includegraphics[scale=0.25]{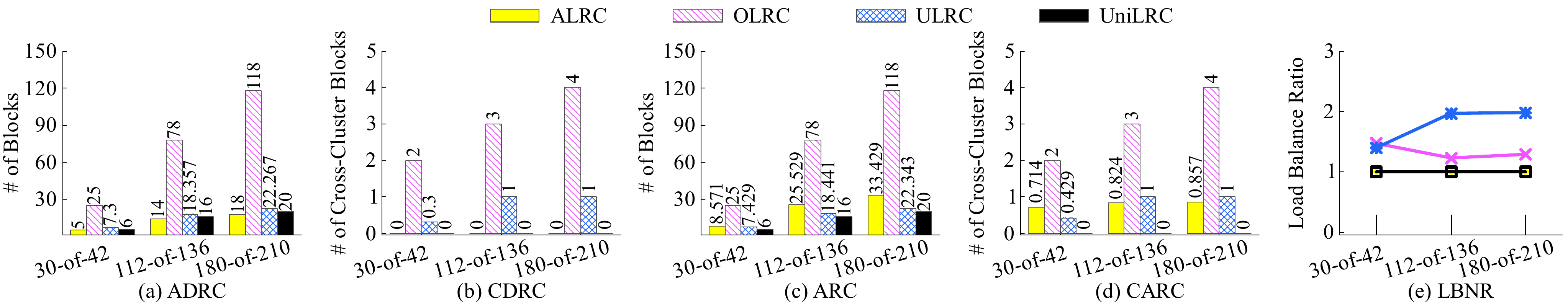}
    \caption{The performance on normal read, degraded read, reconstruction performance across all evaluated codes, measured by average degraded read cost (ADRC), cross-cluster average degraded read cost (CDRC), average recovery cost (ARC), cross-cluster average recovery cost (CARC) and load balance ratio of normal read (LBNR).}
    \label{fig:de_re_simulation}
\end{figure*}

\textbf{Metrics.}
To analyze the performance and reliability of UniLRC and baseline codes, we use six metrics in Table \ref{tab:metrics}.
To evaluate the recovery traffic, we define ADRC and ARC (\textit{recovery locality} $\bar{r}$ in \cref{subsubsection:code_locality}) for degraded read and reconstruction, respectively.
To evaluate cross-cluster recovery traffic, we define two cross-cluster metrics, CDRC and CARC.
For baseline codes, we adopt the replacement strategy used in ECWide \cite{hu2021exploiting}, a state-of-the-art placement strategy for wide LRCs specifically designed to optimize cross-cluster costs.
To evaluate the code reliability, we leverage mean-time-to-data-loss (MTTDL), a typical metric for reliability in the erasure-code system adopting the Markov model \cite{gibson1990redundant, huang2012erasure, hu2021exploiting, kadekodi2023practical}.
The entire set of results are shown in Figure \ref{fig:de_re_simulation} and Table \ref{tab:mttdl}, and we highlight the main points.

\begin{table}[!t]
\centering
\caption{The metrics of performance and reliability comparison for wide LRC. The $cost(b_i)$ denotes the number of blocks across nodes needed to reconstruct block $i$, and $cost^c(b_i)$ denotes the number of blocks across clusters required. The $cost^c$ denotes the number of access blocks for normal read.}
\begin{tabular}{p{6.2cm}<{\centering}|p{1.3cm}<{\centering}}
\hline
Metrics & Evaluation \\
\hline

Average degraded read cost (ADRC) & $\frac{\sum_{i=1}^{k}cost(b_i)}{k}$ \\
Cross-cluster average degraded read cost (CDRC) & $\frac{\sum_{i=1}^{k}cost^c(b_i)}{k}$ \\
Average recovery cost (ARC) & $\frac{\sum_{i=1}^{n}cost(b_i)}{n}$ \\
Cross-cluster average recovery cost (CARC) & $\frac{\sum_{i=1}^{n}cost^c(b_i)}{n}$ \\

Load balance ratio of normal read (LBNR) & $\frac{\max cost^{c}}{avg (cost^c)}$ \\

mean-time-to-data-loss (MTTDL) & Markov \\

\hline
\end{tabular}

\label{tab:metrics}
\end{table}

\textbf{ALRC has the lowest ADRC, closely followed by UniLRC.}
Since ALRC has the smallest number of blocks accessed for recovering data blocks, it also achieves the lowest ADRC. As the stripe width increases, the ADRC gap between ALRC and UniLRC narrows, owing to the wider stripe with larger local groups. For example, the ADRC of UniLRC is $20\%$ higher than that of ALRC for 30-of-42, but this difference narrows to $11\%$ for 180-of-210.


\textbf{UniLRC has the lowest CDRC, ARC and CARC.}
UniLRC achieves zero CDRC because each local group maps to a single cluster in its code construction.
Similarly, ALRC achieves zero CDRC by adopting the ECWide replacement strategy.
However, ECWide has limited optimization for CDRC in OLRC due to the large local groups
inherent in its code construction, causing 
a single local group distributed across multiple clusters.
UniLRC outperforms all other baseline codes across all metrics for reconstruction because it achieves zero cross-cluster traffic and the minimum \textit{recovery locality}.
Firstly, 
UniLRC achieves the optimal ARC by ensuring the minimum \textit{recovery locality} for all blocks. 
Secondly, UniLRC excels in CARC, thanks to its construction tailored to cluster topology, where each local group is mapped into a single cluster. This design eliminates cross-cluster traffic.

\textbf{Both ALRC and UniLRC have the optimal LBNR.}
ALRC and UniLRC achieve the optimal LBNR (equal to 1) by ensuring a uniform distribution of data blocks across clusters.
In contrast, although OLRC and ULRC use ECWide optimization to minimize CARC, their data placement strategies do not account for the normal read I/Os access pattern, leading to higher LBNR.

\begin{figure}[t!]
    \centering
    \includegraphics[scale=0.38]{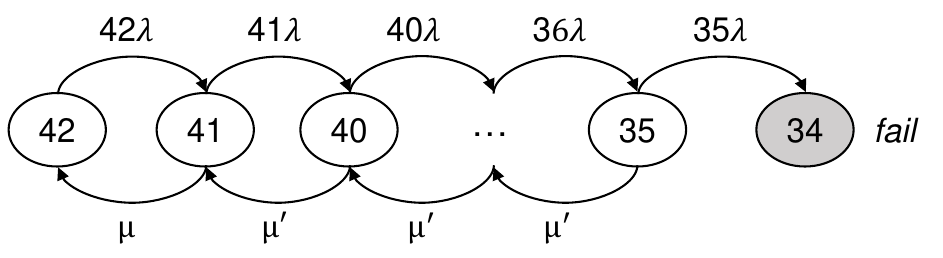}
    \caption{Morkov model for UniLRC$(42,30,6)$.}
    \label{fig:markov}
\end{figure}

\textbf{UniLRC strikes a strong balance between MTTDL and performance.}
Figure \ref{fig:markov} shows the Markov model for UniLRC$(42,30,6)$ code, with other baseline codes modeled in a similar fashion. Each state represents the number of available nodes in a stripe, where each block is stored on a node. For instance, State $42$ indicates that all blocks are available, whereas State $34$ signifies data loss. To simplify our analysis, we adopt the same assumptions as \cite{hu2021exploiting}, namely, independent node failures and the constraint of at most $f$ failed nodes.
Let $\lambda$ be the failure rate of each node. The transition rate from State $i$ to State $i-1$ (for $35 \leq i \leq 42$) is $i\lambda$, as any of the $i$ nodes in State $i$ can fail independently. For recovery, let $\mu$ be the recovery rate from State 41 to State 42, and $\mu'$ the recovery rate from State $i$ to State $i+1$ (for $36 \leq i \leq 40$).
Let $N$ be the total number of nodes, $S$ be the capacity of each node, $B$ be the network bandwidth per node, and $\epsilon$ be the fraction of bandwidth allocated for recovery. If a single node fails, the recovery load is shared by the remaining $N-1$ nodes, giving a total recovery bandwidth of $\epsilon(N-1)B$. Thus, $\mu = \frac{\epsilon(N-1)B}{CS}$, where $C$ is the recovery traffic per node.
For multiple node failures, $\mu' = 1/T$, where $T$ is the time to detect and trigger multi-node recovery, assuming that the recovery prioritized over single-node recovery \cite{huang2012erasure,hu2021exploiting}.
To compute the MTTDL, consider the example in Figure \ref{fig:markov}, where it is given by $\frac{42\lambda \bm{\cdot} \cdots \bm{\cdot} 35\lambda}{\mu \bm{\cdot} \cdots \bm{\cdot} \mu'}$.

We define $C = C_1 + \delta * C_2$, where $C_1$ is the cross-cluster traffic, $C_2$ is the inner-cluster traffic, and $\delta$ is the bandwidth coefficient (e.g., $\delta = 0.1$ means the cross-cluster bandwidth is one-tenth of the inner-cluster bandwidth).
For example, to recover a failed block in UniLRC$(42,30,6)$, the cross-cluster traffic is zero with $C_1 = 0$, and there are 6 uniform local groups with $C_2 = 6$. Thus, $C = 0 + \delta *C_2 = 0.6$ blocks with $\delta = 0.1$. This provides more accurate recovery traffic than previous methods \cite{hu2021exploiting}.
We set the default parameters as follows: $N = 400$, $S = 16$ TB, $\epsilon = 0.1$, $\delta = 0.1$, $T = 30$ minutes, $B = 1$ Gb/s, and $1/\lambda = 4$ years \cite{huang2012erasure,hu2021exploiting}.

Table \ref{tab:mttdl} compares the MTTDL of different LRC constructions. UniLRC shows an average MTTDL of $2.02\times$ and $1.71\times$ over ALRC and ULRC, respectively, as MTTDL is positively correlated with recovery traffic $C$. UniLRC minimizes cross-cluster recovery traffic ($C_1 = 0$) by eliminating cross-cluster traffic, and the small value of $r$  also reduces inner-cluster traffic ($C_2 = r$). Similarly, ULRC outperforms ALRC due to its lower $C$, as the global parity in ALRC requires all data blocks for recovery, thereby increasing both cross-cluster traffic $C_1$ and inner-cluster traffic $C_2$.
OLRC shows the highest MTTDL due to its larger minimum distance, which supports longer Markov chains. However, OLRC shows worse performance on normal read, degraded read and reconstruction, as shown in Figure \ref{fig:de_re_simulation}.
Therefore, UniLRC strikes a favorable balance between MTTDL and performance.

\begin{figure*}[htbp]
    \centering
    \includegraphics[scale=0.25]{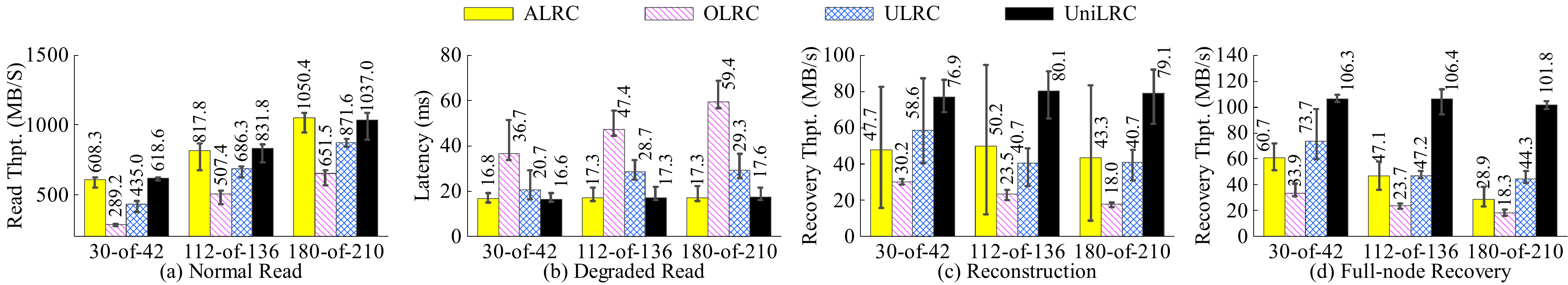}
    \caption{Basic operations performance on wide LRCs, including normal read, degraded read, and single-failure recovery.}
    \label{fig:sys_total_performance}
\end{figure*}

\section{System-Level evaluation} \label{sec:system_performance}
\textbf{Setup.} We conduct all experiments on 21 physical machines (1 client, 1 coordinator and 19 proxies) in the Wisconsin cluster of CloudLab \cite{duplyakin2019design}.
Each machine is equipped with a 6-core E5-2660v3 CPU, 160 GB DRAM, a 10Gb/s NIC, and a 1.2 TB SATA HDD. To enforce cross-cluster bandwidth constraints, we use the Linux traffic control tool Wondershaper \cite{wondershaper} to limit the outgoing bandwidth of each gateway. In our experiments, the cross-cluster bandwidth is set to 1Gb/s (1:10 bandwidth ratio), and the block size is configured as 1 MB (adopted in QFS \cite{ovsiannikov2013quantcast}).
We also use the same baseline LRCs from Microsoft \cite{huang2012erasure} (ALRC) and Google \cite{kadekodi2023practical}(OLRC and ULRC), with their parameters shown in Table \ref{tab:wide_codes}. 
For all baseline LRCs, we adopt the state-of-the-art ECWide replacement optimization \cite{hu2021exploiting}.
We report the average results over ten runs and provide the variance of these results.

\begin{table}[!h]
\centering
\caption{MTTDLs across all wide LRCs (years).}
\begin{tabular}{p{2cm}<{\centering}|p{1.2cm}<{\centering}p{1.2cm}<{\centering}p{1.2cm}<{\centering}p{1.2cm}<{\centering}}
\hline
\textbf{Scheme} & \textbf{ALRC} &  \textbf{OLRC} &\textbf{ULRC} & \textbf{UniLRC} \\
\hline

30-of-42  & 4.29e+10  & 3.24e+23 & 5.53e+10 & \textbf{9.62e+10}\\

112-of-136 & 3.47e+33   & 5.37e+49 & 4.12e+33 & \textbf{7.33e+33 } \\ 

180-of-210 & 1.64e+40   & 1.53e+60 & 1.74e+40 & \textbf{2.81e+40}\\
\hline

\hline
\end{tabular}

\label{tab:mttdl}
\end{table}

\textbf{Experiment 1 (Normal read).}
We evaluate normal read throughput under various k-of-n schemes in Table \ref{tab:wide_codes}. 
Figure \ref{fig:sys_total_performance}(a) demonstrates that UniLRC outperforms OLRC and ULRC in normal read performance. For example, UniLRC achieves an average increase of 27.46\% read throughput of ULRC.
The reason  is that UniLRC maximizes read parallelism (refer to Property 1 in \cref{subsec:basic_ops}).
Similarly, ALRC also shows good results with the ECWide \cite{hu2021exploiting} placement strategy.
Additionally, larger k-of-n schemes result in higher read throughput due to the increased data volume occupying the available bandwidth.
For example, with 180-of-210, the read throughput UniLRC and ALRC achieve the saturated network bandwidth of 1 Gbps cross-cluster bandwidth, while with 30-of-42, the read throughput UniLRC and ALRC achieve around 61.8\% of the available bandwidth.

\textbf{Experiment 2 (Degraded read).}
We evaluate the degraded read performance of a single unavailable data block in terms of the average degraded read latency under various k-of-n schemes.
Figure \ref{fig:sys_total_performance}(b) shows that UniLRC and ALRC outperform OLRC and ULRC in degraded read performance.
Both UniLRC and ALRC can avoid the cross-cluster traffic and minimize the ADRC (see Figure \ref{fig:de_re_simulation}(a)\&(b)) when recovering the data block, resulting in better degraded read latency.
The large local group of OLRC leads to a broad distribution across clusters, which results in its relatively poorer performance compared to other codes.
In summary, the UniLRC reduces the average degraded read latency by 33.15\% compared with ULRC, the state-of-the-art wide LRC.

\textbf{Experiment 3 (Single-failure recovery).}
We evaluate the recovery throughput for reconstruction (single-block recovery) under various k-of-n schemes.
Figure \ref{fig:sys_total_performance}(c) shows that UniLRC always outperforms the three baseline LRCs in recovery throughput, with an average increase of 67.18\%, 229.01\%, and 68.69\% over  ALRC, OLRC, and ULRC, respectively, for the same $n$ and $k$.
As $n,k$ increase, the recovery throughput of OLRC and ULRC decreases, while UniLRC remains stable.
This is because a larger k-of-n scheme introduces more cross-cluster traffic for OLRC and ULRC, while UniLRC still incurs zero cross-cluster traffic, which dominates the end-to-end recovery time.
Additionally, ALRC exhibits relatively large error bars due to its two types of reconstruction blocks, which incur different network traffic. Data and local parity blocks are recovered through local group reconstruction, while global parity blocks are recovered via global recovery.

For full-node recovery, we first generate the stripes and distribute them evenly across nodes, with around $40$ GB of data. We then turn off one node to trigger full-node recovery and measure the average recovery throughput.
Figure \ref{fig:sys_total_performance}(d) shows that all wide LRCs achieve a higher recovery throughput than in the reconstruction performance shown in Figure \ref{fig:sys_total_performance}(c).
This improvement is due to the higher parallelism in full-node recovery tasks compared with single-block recovery.
Additionally, all baseline LRCs and UniLRC exhibit lower recovery throughput as $(n,k)$ increases, due to larger overheads from network traffic and coding computation.
In summary, UniLRC always outperforms the three baseline LRCs in full-node recovery throughput, with an average increase of 130.04\%, 314.35\%, and 90.27\% over  ALRC, OLRC, and ULRC, respectively.

\begin{figure}[t!]
    \centering
    \includegraphics[scale=0.25]{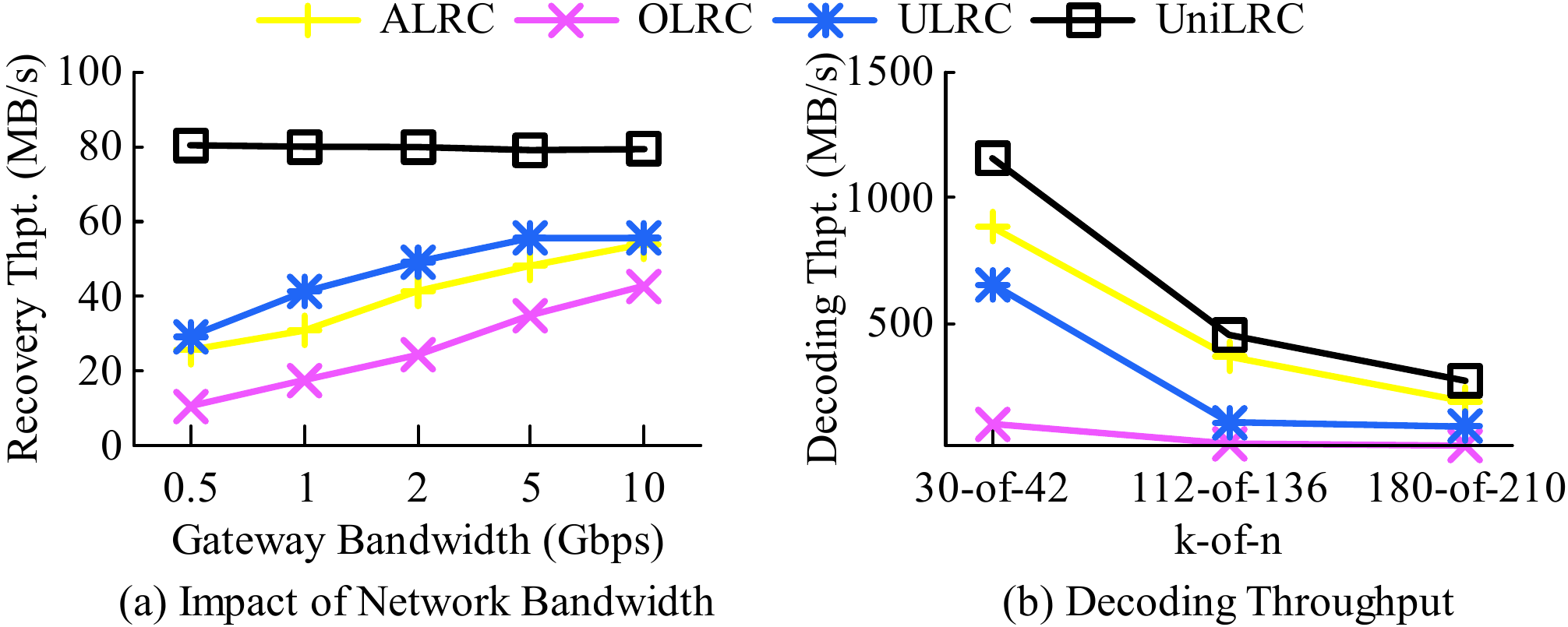}
    \caption{Experiment 4 \& Experiment 5: Comparison of reconstruction performance with varying cross-cluster bandwidth and decoding speed with different wide LRCs.}
    \label{fig:network_coding}
\end{figure}

\textbf{Experiment 4 (Impact of network bandwidth).}
We evaluate the reconstruction performance versus the cross-cluster network bandwidth under 180-of-210 scheme, varying the network bandwidth from 0.5 Gbps to 10 Gbps.
As shown in Figure \ref{fig:network_coding}(a), with the increase of cross-cluster network bandwidth, the recovery throughput of all baselines increases sharply, while that of UniLRC remains stable.
This is because UniLRC incurs no cross-cluster traffic during reconstruction.
Additionally, ULRC still exhibits a performance gap compared to UniLRC, even with abundant network bandwidth.
For example, with 10 Gbps cross-cluster bandwidth (unlimited), UniLRC achieves a $42.66\%$ increase in recovery throughput compared with ULRC.
The main reason is that UniLRC has lower recovery cost due to its the minimum \textit{recovery locality} Theorem \ref{thm:locality}.



\textbf{Experiment 5 (Decoding performance).}
We evaluate the decoding performance in terms of decoding throughput under various k-of-n schemes. Figure \ref{fig:network_coding}(b) shows that UniLRC outperforms all baseline codes, with an average decoding throughput of $1.33 \times$, $19.03 \times$, and $3.05 \times$  over  ALRC, OLRC, and ULRC, respectively.
The OLRC shows the lowest decoding throughput due to its large local groups, which results in a large volume of data being retrieved. 
The ALRC performs worse than UniLRC due to the higher cost and complexity of decoding the global parity block.

\begin{figure}[t!]
    \centering
    \includegraphics[scale=0.25]{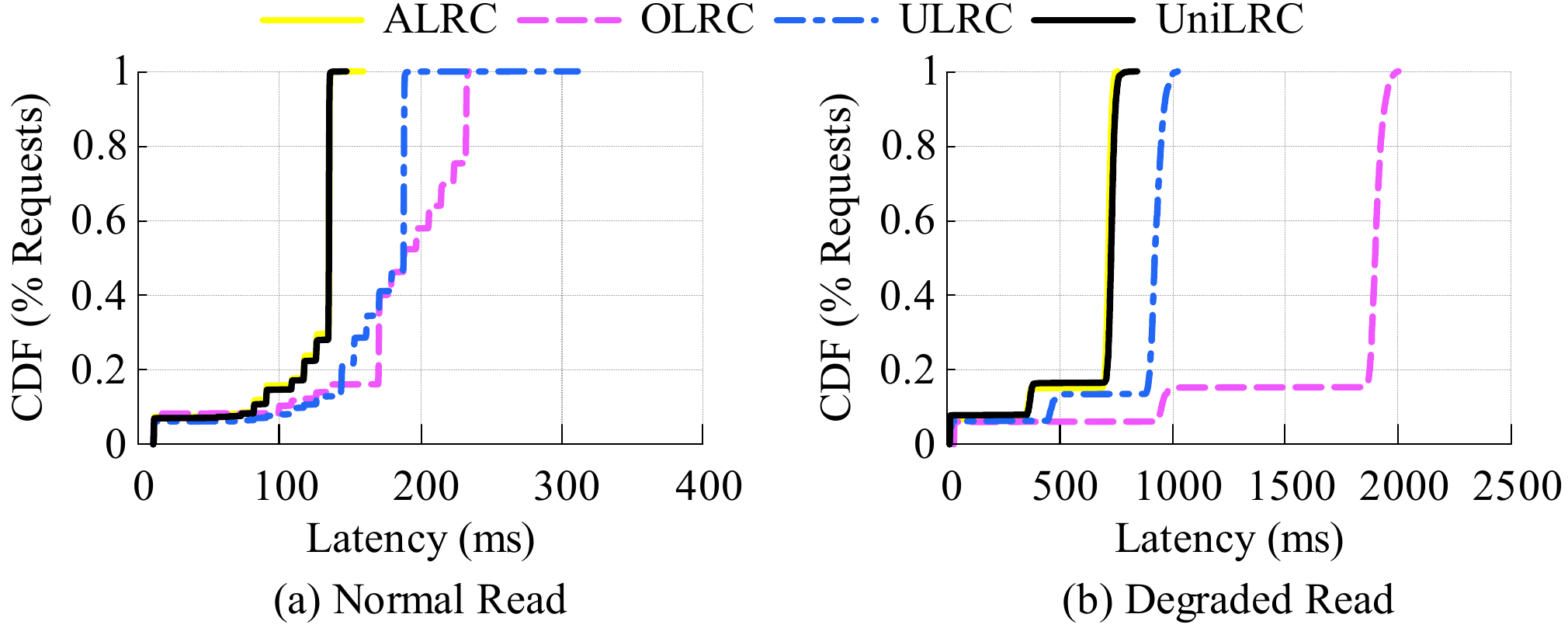}
    \caption{Production workload performance on normal read and degraded read under 180-of-210 scheme.}
    \label{fig:workload}
\end{figure}
\textbf{Experiment 6 (Production workload).}
We evaluate the normal read and degraded read performance using a production object store workload with variable sizes \cite{vajha2018clay,rashmi2016ec}. Following the configuration in \cite{vajha2018clay}, we select objects of medium (1MB), medium/large (32MB), and large (64MB) sizes, distributed in respective proportions of 82.5\%, 10\%, and 7.5\%, in line with the Facebook data analytics cluster reported in \cite{rashmi2016ec}. The block size and code parameters are set to 1MB and $n=210,k=180$, respectively, ensuring that each stripe consists of multiple objects.
We first generate the data stripes with the specified object sizes and ratios, then place the stripes in a round-robin manner. Finally, the client issues normal read and degraded read requests.

Figure \ref{fig:workload} shows the cumulative distribution function (CDF) of requests for normal read latency and degraded read latency over 1000 requests. We observe that UniLRC and ALRC outperform ULRC and OLRC in both normal and degraded read scenarios. Specifically, compared to ULRC, the state-of-the-art wide LRC, UniLRC reduces the average normal read latency by 25.89\% and the average degraded read latency by 23.23\%, respectively.
The excellent read performance of UniLRC can be attributed to its optimal network and computation complexity properties (refer to Properties 1 and 2 in \cref{subsec:basic_ops}).

\section{Conclusion}\label{sec:conclusion}
Existing wide LRCs struggle to achieve optimal fault tolerance while maintaining high performance.
In this paper, we introduce UniLRC, a novel family of wide LRCs designed to optimize both performance and reliability. UniLRCs are constructed using a generator matrix, followed by matrix decompositions that tightly couple local and global parity blocks. This approach ensures distance optimality while addressing the locality limitations.
Compared to state-of-the-art wide LRCs, UniLRC strikes a better balance between reliability and performance.

\bibliographystyle{ACM-Reference-Format}
\bibliography{bibfile}


\end{document}